\newtheorem{lemma}{Lemma}
\newtheorem{prop}[lemma]{Proposition}
\newtheorem{thm}[lemma]{Theorem}
\theoremstyle{theorem}\newtheorem{eg}{Example}
\theoremstyle{theorem}\newtheorem{defn}[lemma]{Definition}
\newcommand{\qedd}{\hfill{$\square$}}
\newcommand{\paren}[1]{\left(#1\right)}
\newcommand{\set}[1]{\left\{#1\right\}}
\newcommand{\abs}[1]{\left|#1\right|}
\newcommand{\normm}[1]{{\left\vert\kern-0.25ex\left\vert\kern-0.25ex\left\vert #1
    \right\vert\kern-0.25ex\right\vert\kern-0.25ex\right\vert}}
\DeclareMathOperator{\diag}{diag}
\newcommand{\R}{\mathbb R}
\newcommand{\ol}[1]{\overline{#1}}
\newcommand{\bs}{\backslash}
\newcommand{\calB}{\mathcal{B}}
\newcommand{\calC}{\mathcal{C}}
\newcommand{\calE}{\mathcal{E}}
\newcommand{\calG}{\mathcal{G}}
\newcommand{\calI}{\mathcal{I}}
\newcommand{\calL}{\mathcal{L}}
\newcommand{\calN}{\mathcal{N}}
\newcommand{\calP}{\mathcal{P}}
\newcommand{\calT}{\mathcal{T}}
\begin{document}
\title{Failure Localization in Power Systems via Tree Partitions}

\IEEEoverridecommandlockouts

\author{Linqi Guo, Chen~Liang, Alessandro~Zocca, Steven H.~Low,~and~Adam~Wierman
\thanks{This work has been supported by Resnick Fellowship, Linde Institute Research Award, NWO Rubicon grant 680.50.1529., NSF grants through PFI:AIR-TT award 1602119, EPCN 1619352, CNS 1545096, CCF 1637598, ECCS 1619352, CNS 1518941, CPS 154471, AitF 1637598, ARPA-E grant through award DE-AR0000699 (NODES) and GRID DATA, DTRA through grant HDTRA 1-15-1-0003 and Skoltech through collaboration agreement 1075-MRA.}
\thanks{The authors are with the Department of Computing and Mathematical Sciences, California Institute of Technology, Pasadena,
CA, 91125, USA. Email: \texttt{\{lguo, cliang2, azocca, slow, adamw\}@caltech.edu}.}}

\maketitle

\begin{abstract}
Cascading failures in power systems propagate non-locally, making the control and mitigation of outages extremely hard. In this work, we use the emerging concept of the \emph{tree partition} of transmission networks to provide an analytical characterization of line failure localizability in transmission systems. Our results rigorously establish the well perceived intuition in power community that failures cannot cross bridges, and reveal a finer-grained concept that encodes more precise information on failure propagations within tree-partition regions. Specifically, when a non-bridge line is tripped, the impact of this failure only propagates within well-defined components, which we refer to as \emph{cells}, of the tree partition defined by the bridges. In contrast, when a bridge line is tripped, the impact of this failure propagates globally across the network, affecting the power flow on all remaining transmission lines. This characterization suggests that it is possible to improve the system robustness by \emph{temporarily} switching off certain transmission lines, so as to create more, smaller components in the tree partition; thus spatially localizing line failures and making the grid less vulnerable to large-scale outages. We illustrate this approach using the IEEE 118-bus test system and demonstrate that switching off a negligible portion of transmission lines allows the impact of line failures to be significantly more localized without substantial changes in line congestion.  
\end{abstract}

\section{Introduction}\label{section:intro}
Power system reliability is a crucial component in the development of sustainable modern power infrastructure. Recent blackouts, especially the 2003 and 2012 blackouts in Northwestern U.S.~\cite{report2003blackout} and India \cite{report2013blackout}, demonstrated the devastating economic impact a grid failure can cause. In even worse cases, where facilities like hospitals are involved, blackouts pose direct threat to people's health and lives.

Because of the intricate interactions among power system components, outages may cascade and propagate in a very complicated, non-local manner~\cite{hines2017cascading, dobson2016statistics,bernstein2014power}, exhibiting very different patterns for different networks~\cite{soltan2015analysis}. Such complexity originates from the interplay between network topology and power flow physics, and is aggravated by possible hidden failures \cite{chen2005cascading} and human errors~\cite{carreras2002critical}.  This complexity is the key challenge for research into the modeling, control, and mitigation of cascading failures in power systems.  

There are three traditional approaches for  characterizing the behavior of cascades in the literature: (i) using simulation models  \cite{dobson2007complex} that rely on Monte-Carlo approaches to account for the steady state power flow redistribution on DC \cite{carreras2002critical,anghel2007stochastic,yan2015cascading,bernstein2014power} or AC \cite{nedic2006criticality,rios2002value,song2016dynamic} models; (b) studying purely topological models that impose certain assumptions on the cascading dynamics (e.g., failures propagate to adjacent lines with high probability) and infer component failure propagation patterns from graph-theoretic properties \cite{brummitt2003cascade,kong2010failure,crucitti2004topological}; (c) investigating simplified or statistical cascading failure dynamics \cite{dobson2005probilistic,wang2012markov,rahnamay2014stochastic,hines2017cascading}. In each of these approaches, it is typically challenging to make general inferences across different scenarios due to the lack of structural understanding of power redistribution after line failures.

A new approach has emerged in recent years, which seeks to use spectral properties of the network graph in order to derive precise structural properties of the power system dynamics, e.g., \cite{guo2017spectral,guo2017monotonicity,guo2018graph,guo2018cyber}.  The spectral view is powerful as it often reveals surprisingly simple characterizations of the complicated system behaviors. In the cascading failure context, a key result from this approach is about the \emph{line outage distribution factor} \cite{wood1996generation,soltan2015analysis}. Specifically, it is shown in \cite{guo2017monotonicity} that the line outage distribution factor is closely related to transmission graph spanning forests. 

While this literature has yet to yield a precise characterization of cascades, it has suggested a new structural representation of the transmission graph called the \emph{tree partition}, which is particularly promising.  For example, \cite{guo2017monotonicity} shows that line failures in a transmission system cannot propagate across different regions of the tree partition (for more background on the tree partition, see Section \ref{section:basic}).

\textbf{Contributions of this paper:} \emph{We prove that the tree partition proposed in \cite{guo2017monotonicity} can be used to provide an analytical characterization of line failure localizability, under a DC power flow model, and we show how to use this characterization to mitigate failure cascades by temporarily switching off a small number of transmission lines.} Our results rigorously establish the well perceived intuition in power community that failures cannot cross bridges, and reveal a finer-grained concept that encodes more precise information on failure propagations within tree-partition regions. This work builds on the recent work focused on the line outage distribution factor, e.g., \cite{lai2013allerton,soltan2015analysis,guo2017monotonicity}, and shows that the tree partition is a particularly useful representation of this factor, one that captures many aspects of how line failures can cascade.

The formal characterization of localizability is given in Theorem \ref{thm:summary}, which summarizes the technical results in Sections \ref{section:non-bridge} and \ref{section:bridge}.  In particular, in Section \ref{section:non-bridge}, we characterize the power redistribution after the tripping of a non-bridge line and show that the impact of such failures only propagates within well-defined components, which we refer to as cells, inside the tree partition regions. In Section \ref{section:bridge}, we consider the failure of bridge lines and prove that, in normal operating conditions, such failures propagate globally across the network and impact the power flow on all transmission lines. 
In order to prove these results, we depend on properties of the tree partition proved in \cite{guo2017monotonicity} as well as some novel properties derived in Section \ref{section:basic}. Further, we make use of the block decomposition of tree partition regions to completely eliminate the graph spanning forests among distinct cells, which in the spectral view means failure localization~\cite{guo2017monotonicity}. Lastly, we apply classical techniques from algebraic geometry to address potential pathological system specifications and establish our results.

The characterization we provide in Theorem \ref{thm:summary} yields many interesting insights for the planning and management of power systems and, further, suggests a new approach for mitigating the impact of cascading failures. Specifically, our characterization highlights that switching off certain transmission lines \emph{temporarily} in responds to the real-time injection profile can lead to more, smaller regions/cells, which localize line failures, thus making the grid less vulnerable against line outages. In Section \ref{section:case_study}, we illustrate this approach using the IEEE 118-bus test system. We demonstrate that switching off only a negligible portion of transmission lines can lead to significantly better control of cascading failures.  Further, we highlight that this happens without significant increases in line congestion across the network. 

\section{Preliminaries}\label{section:model}
We use the graph $\calG=(\calN,\calE)$ to describe a power transmission network, where $\calN=\set{1,\ldots, n}$ is the set of buses and $\calE\subset\calN \times \calN$ is the set of transmission lines. The terms bus/vertex and line/edge are used interchangeably. An edge in $\calE$  between vertices $i$ and $j$ is denoted either as $e$ or $(i,j)$. We assume $\calG$ is connected and simple, and assign an arbitrary orientation over $\calE$ so that if $(i,j)\in\calE$ then $(j,i)\notin\calE$. The line susceptance of $e$ is denoted as $B_e$ and the branch flow on $e$ is denoted as $P_e$. The susceptance matrix is defined to be the diagonal matrix $B=\diag(B_e: e\in\calE)$. 

We denote the power injection and phase angle at bus $i$ as $p_i$ and $\theta_i$, and use $n$ and $m$ to denote the number of buses and transmission lines in $\calG$. The vertex-edge incidence matrix of $\calG$ is the $n\times m$ matrix $C$ defined as
$$
C_{ie}=\begin{cases}
  1 & \text{if vertex }i\text{ is the source of }e\\
  -1 & \text{if vertex }i\text{ is the target of }e\\
  0 &\text{otherwise.}
\end{cases}
$$
With the above notation, the DC power flow model can be written as
\begin{subequations}\label{eqn:dc_model}
\begin{IEEEeqnarray}{rCl}
	p&=&CP \label{eqn:flow_conservation}\\
	P&=&BC^T\theta, \label{eqn:kirchhoff}
\end{IEEEeqnarray}
\end{subequations}
where \eqref{eqn:flow_conservation} is the flow conservation constraint and \eqref{eqn:kirchhoff} is Kirchhoff's and Ohm's Laws. The slack bus phase angle in $\theta$ is typically set to $0$ as a reference to other buses. With this convention, the DC model \eqref{eqn:dc_model} has a unique solution $\theta$ and $P$ for each injection vector $p$ such that $\sum_{j\in\calN}p_j=0$.

\begin{figure}
\centering
\iftoggle{isarxiv}{
\includegraphics[width=.275\textwidth]{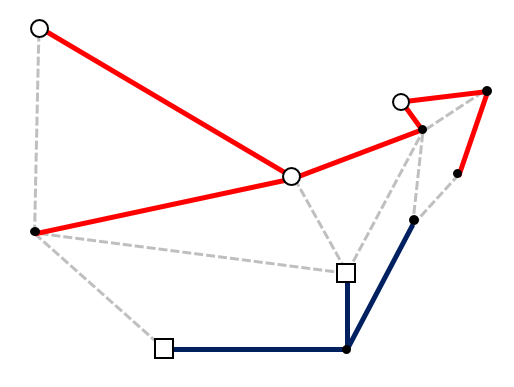}
}{
\includegraphics[width=.275\textwidth]{figs/spanning_forest.png}
}
\caption{An example element in $\calT(\calN_1,\calN_2)$, where circles correspond to elements in $\calN_1$ and squares correspond to elements in $\calN_2$. The two trees containing $\calN_1$ and $\calN_2$ are highlighted as solid lines.}\label{fig:spanning_forest}
\end{figure}

When a line $e$ is tripped, the power flow redistributes according to the DC model \eqref{eqn:dc_model} on the newly formed graph $\calG'=(\calN,\calE\bs\set{e})$. If $\calG'$ is still connected, then the branch flow change on a line $\hat{e}$ is given as
$$
	\Delta P_{\hat{e}} = P_e\times K_{e\hat{e}},
$$
where $K_{e\hat{e}}$ is the \emph{line outage distribution factor} \cite{wood1996generation} from $e$ to $\hat{e}$. It is known that this distribution factor is independent of the original power injection $p$ and can be computed from the matrices $B$ and $C$ \cite{wood1996generation}.

If the new graph $\calG'$ is disconnected, then it is possible that the original injection $p$ is no longer balanced in the connected components of $\calG'$. Thus, to compute the new power flow, a certain power balance rule $\calB$ needs to be applied. Several such rules have been proposed and evaluated in literature based on load shedding or generator response \cite{soltan2015analysis,bernstein2014power,bienstock2010n-k,bienstock2011control,bernstein2014vulnerability}. In this work, we do not specialize to any such rule and instead opt to identify the key properties of these rules that allow our results to hold. With this more abstract approach, we can characterize the power flow redistribution under a class of power balance rules. 

The line outage distribution factor has been extensively studied in previous work, and we make use of some important properties from this literature. In \cite{guo2017monotonicity}, by studying the spectrum of the graph Laplacian, a formula for $K_{e\hat{e}}$ is given in terms of the graph structure of $\calG$. This formula plays a central role in proving almost all of the results in this paper, and we thus restate it here. To do so, we need some more notation. For two subsets of vertices $\calN_1, \calN_2\subset\calN$, let $\calT(\calN_1, \calN_2)$ be the set of spanning forests of $\calG$ consisting of \emph{exactly} two trees that contain $\calN_1$ and $\calN_2$ respectively\footnote{This definition does not require $\calN_1$ and $\calN_2$ to be disjoint, and $\calT(\calN_1, \calN_2)$ is understood to be empty when $\calN_1\cap\calN_2\neq \emptyset$.  See \cite{guo2017monotonicity} for more discussions.}. See Fig.~\ref{fig:spanning_forest} for an illustration. Given a set $E\subset\calE$ of lines, we assign a weight to $E$ by multiplying the susceptances over all lines in $E$
$$
\chi(E):=\prod_{e\in E}B_e
$$
and denote the set of all spanning trees of $\calG$ with edges only from $E$ by $\calT_E$ (which can be empty if $E$ is too small). The following result from \cite{guo2017monotonicity} is used throughout this paper.
\begin{thm}\label{thm:redist_factor}
Let $e=(i,j), \hat{e}=(w,z)$ be edges such that the $\calG'=(\calN, \calE\bs\set{e})$ is connected. Then $K_{e\hat{e}}$ is given by
\begin{equation*}
B_{\hat{e}}\times\frac{\sum_{E\in \calT(\set{i,w},\set{j,z})}\chi(E)-\sum_{E\in \calT(\set{i,z},\set{j,w})}\chi(E)}{\sum_{E\in\calT_{\calE\bs \set{(i,j)}}}\chi(E)}
\end{equation*}
\end{thm}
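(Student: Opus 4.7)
The plan is to reduce the theorem to two applications of the Matrix-Tree Theorem (in particular, Chaiken's All-Minors Matrix-Tree Theorem). Let $L := CBC^T$ denote the weighted graph Laplacian, and for any pair of buses $u,v$ write $e_{uv} := e_u - e_v$ where $e_v\in\R^n$ is the standard basis vector. From $p = L\theta$ under the slack-bus convention, we have $\theta_w - \theta_z = e_{wz}^T L^\dagger p$, so the original flow on any line $\hat{e}=(w,z)$ equals $P_{\hat{e}} = B_{\hat{e}}\, e_{wz}^T L^\dagger p$, and in particular $P_e = B_e\, e_{ij}^T L^\dagger p$.

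First, I would derive a compact closed form for the LODF purely in terms of $L^\dagger$. Tripping $e$ is precisely the rank-one update $L \mapsto L' := L - B_e\, e_{ij} e_{ij}^T$, while $p$ is unchanged. Since $\calG'$ is connected, $L$ and $L'$ share the same kernel (spanned by the all-ones vector), and hence a Sherman--Morrison identity for pseudoinverses applies. Computing $\theta'_w - \theta'_z$ via $(L')^\dagger p$ and dividing the resulting flow change by $P_e$ yields
\begin{equation*}
K_{e\hat{e}} \;=\; B_{\hat{e}} \cdot \frac{e_{wz}^T L^\dagger e_{ij}}{1 - B_e\, e_{ij}^T L^\dagger e_{ij}}.
\end{equation*}
This reduces the problem to interpreting the numerator and denominator combinatorially.

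Second, I would identify both quantities as weighted spanning-forest sums. For the denominator, Kirchhoff's classical effective-resistance identity gives
$1 - B_e\, e_{ij}^T L^\dagger e_{ij} = \bigl(\sum_{E \in \calT_{\calE \setminus \{e\}}} \chi(E)\bigr) \big/ \bigl(\sum_{T \in \calT_\calE} \chi(T)\bigr)$,
since every spanning tree of $\calG$ either contains $e$ or is itself a spanning tree of $\calG \setminus \{e\}$. For the bilinear quantity $e_{wz}^T L^\dagger e_{ij}$ in the numerator, I would invoke the All-Minors Matrix-Tree Theorem: it expresses this as a signed sum of weighted spanning 2-forests, with elements of $\calT(\{i,w\},\{j,z\})$ contributing $+\chi(E)$ and elements of $\calT(\{i,z\},\{j,w\})$ contributing $-\chi(E)$, normalized by $\sum_{T\in\calT_\calE}\chi(T)$. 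Forming the ratio cancels the common normalization and produces exactly the expression stated in the theorem.

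The main obstacle I anticipate is the careful bookkeeping of signs when invoking the All-Minors Matrix-Tree Theorem. The signed partition of $e_{wz}^T L^\dagger e_{ij}$ into the two forest families $\calT(\{i,w\},\{j,z\})$ and $\calT(\{i,z\},\{j,w\})$ is controlled by cofactor signs that depend on the rows and columns selected from the reduced Laplacian, and one must check that the outcome is independent of both the arbitrary edge orientation on $\calE$ and the choice of slack bus used to define $L^\dagger$. A secondary technicality is justifying the pseudoinverse Sherman--Morrison step, which relies on the connectivity hypothesis on $\calG'$ to ensure that $\ker L = \ker L'$ and $e_{ij} \in \mathrm{range}(L)$.
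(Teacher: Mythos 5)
This paper does not actually prove Theorem~\ref{thm:redist_factor}; it is quoted without proof from the cited prior work, which derives it through the same spectral/Laplacian route you propose. Your argument is correct and is essentially that standard derivation: the Sherman--Morrison update of $L^\dagger$ (valid because $\calG'$ connected keeps the kernel equal to $\spann(\mathbf{1})$ and makes $1-B_e e_{ij}^T L^\dagger e_{ij}\neq 0$) gives $K_{e\hat{e}}=B_{\hat{e}}\,e_{wz}^T L^\dagger e_{ij}\big/\paren{1-B_e e_{ij}^T L^\dagger e_{ij}}$, and the all-minors Matrix-Tree Theorem together with the effective-resistance identity converts numerator and denominator into the stated spanning-forest sums with the common factor $\sum_{T\in\calT_\calE}\chi(T)$ cancelling.
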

This result reflects Kirchhoff's Law in a precise way and shows that the impact of a line failure propagages through spanning forests inside $\calG$. Interested readers are directed to \cite{guo2017monotonicity} for a more detailed discussion.

\section{Tree Partition: Definition and Properties}\label{section:basic}
\begin{figure}
\centering
\iftoggle{isarxiv}{
\includegraphics[width=0.325\textwidth]{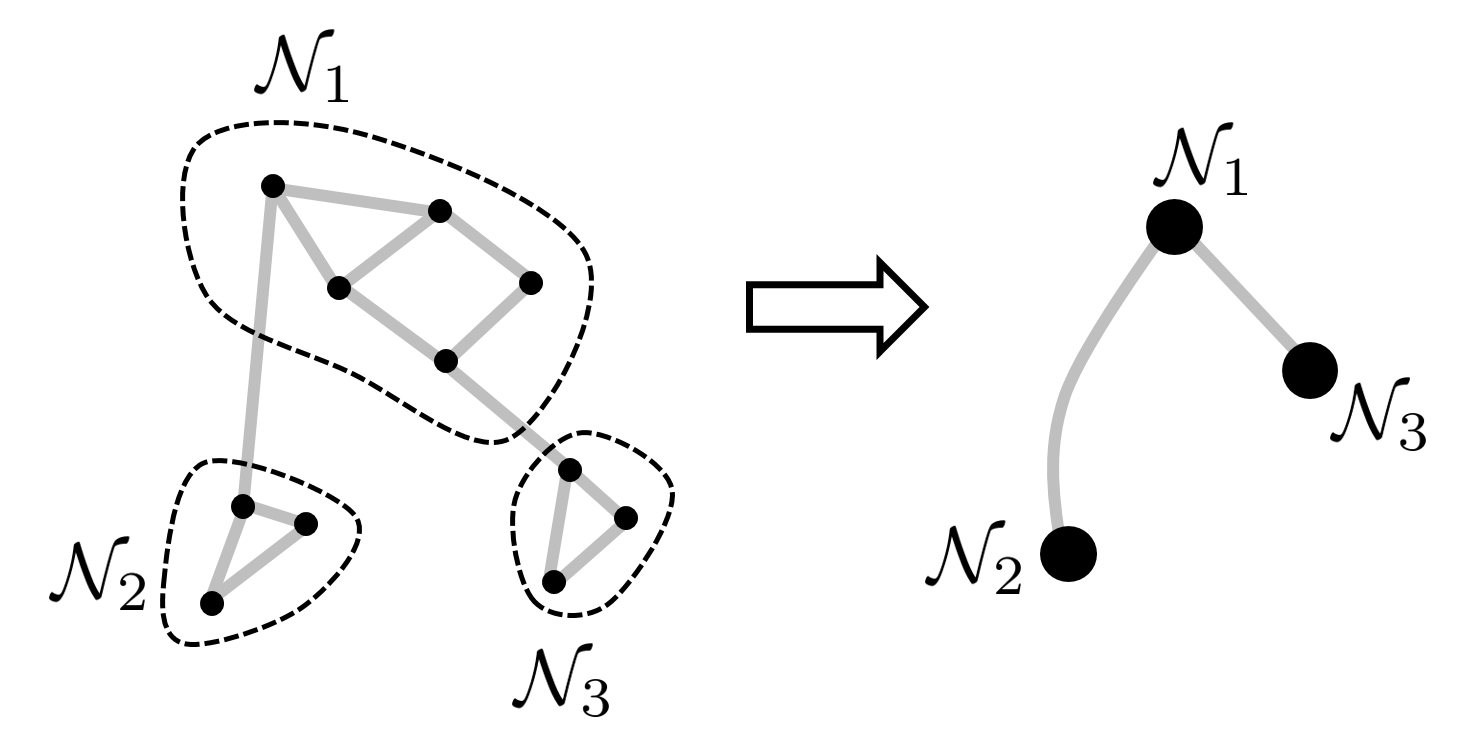}
}{
\includegraphics[width=0.325\textwidth]{figs/tree_partition.png}
}
\caption{The construction of $\calG_\calP$ from $\calP$.}\label{fig:tree_partition}
\end{figure}

The idea of tree partition for analyzing cascading failure was first introduced in \cite{guo2017monotonicity}. Here, we define the tree partition, discuss its uniqueness and show that the ``finest'' tree partition of a general graph can be computed in linear time.

For a power network $\calG=(\calN,\calE)$, a collection $\calP=\set{\calN_1,\calN_2,\cdots, \calN_k}$ of subsets of $\calN$ is said to form a \emph{partition} of $\calG$ if $\calN_i\cap\calN_j=\emptyset$ for $i\neq j$ and $\cup_{i=1}^k \calN_i=\calN$. For any partition, we can define a reduced multi-graph $\calG_\calP$ from $\calG$ as follows. First, we reduce each subset $\calN_i$ to a super node (see Fig.~\ref{fig:tree_partition}). The collection of all super nodes forms the node set for $\calG_\calP$. Second, we add an undirected edge connecting the super nodes $\calN_i$ and $\calN_j$ for each pair of $n_i, n_j\in\calN$ with the property that $n_i\in\calN_i$, $n_j\in\calN_j$ and $n_i$ and $n_j$ are connected in $\calG$. Note that multiple ledges are added when multiple pairs of such $n_i,n_j$ exist. Unlike the graph $\calG$ to which we assign an arbitrary orientation (and thus is a directed graph), the reduced multi-graph $\calG_\calP$ is undirected.

\begin{defn}
A partition $\calP=\set{\calN_1,\calN_2,\cdots,\calN_k}$ of $\calG$ is said to be a \textbf{tree partition} if the reduced graph $\calG_\calP$ forms a tree.
\end{defn}

\begin{defn}
Given a tree partition $\calP=\set{\calN_1,\calN_2,\cdots,\calN_k}$, the sets $\calN_i$ are called the \textbf{regions} of $\calP$. An edge $e=(w,z)$ with both endpoints inside $\calN_i$ is said to be \textbf{within} $\calN_i$. If $e$ is not within $\calN_i$ for any $i$, then we say $e$ forms a \textbf{bridge}\footnote{We remark that our definition of bridges agrees with the classical definition of bridges in graph theory (i.e., the removal of any such edge disconnects the original graph) in the sense that if the tree partition $\calP$ is irreducible (see Definition \ref{defn:irreducible} later) any bridge defined in our sense is a bridge in the classical sense, and vice versa.}.
\end{defn}

Tree partitions of a power network $\calG$ are generally not unique. For instance, one can always collapse $\calG$ into a single region with the partition $\calP_0=\set{\calN}$, which is a trivial tree partition of $\calG$. This yields a different tree partition for the graph shown in Fig.~\ref{fig:tree_partition}. Nevertheless, if we require the tree partition to be as ``fine'' as possible, such a partition is unique.

More concretely, given a graph $\calG$, we define a partial order $\succeq$ over the set of all tree partitions of $\calG$ (which is nonempty as it always contains the trivial partition $\calP_0$) as follows: For two tree partitions $\calP^1=\set{\calN_1^1,\calN_2^1,\cdots,\calN_{k_1}^1}$ and $\calP^2=\set{\calN_1^2,\calN_2^2,\cdots,\calN_{k_2}^2}$, we say $\calP^1$ is \emph{finer} than $\calP^2$, denoted as $\calP^1\succeq \calP^2$, if for any $i=1,2,\ldots, k_1$, there exists some $j(i)\in \set{1,2,\ldots, k_2}$ such that $\calN_i^1\subset \calN_{j(i)}^2$. That is, $\calP^1$ is finer than $\calP^2$ if each region in $\calP^1$ is contained in some region in $\calP^2$ (see Fig.~\ref{fig:partition_order}). It is routine to check that $\succeq$ defines a partial order over all possible tree partitions of $\calG$.

\begin{defn}\label{defn:irreducible}
A tree partition $\calP$ of $\calG$ is said to be \textbf{irreducible} if $\calP$ is maximal with respect to the partial order $\succeq$.
\end{defn}

In other words, an irreducible tree partition $\calP$ of $\calG$ is a partition that cannot be reduced to a finer tree partition.

\begin{prop}\label{prop:unique_irreducible}
For any graph $\calG$, there exists a unique irreducible tree partition.
\end{prop}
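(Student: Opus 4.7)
The plan is to establish uniqueness via a ``common-refinement'' argument, with existence following from finiteness of the partition lattice. The workhorse is the following alternative characterization, which I would prove first as a small lemma: \emph{$\calP$ is a tree partition of $\calG$ if and only if every cycle of $\calG$ has all its vertices inside a single region of $\calP$.} The easy direction is to note that any cycle of $\calG_\calP$ lifts, through intra-region paths, to a cycle of $\calG$ visiting multiple regions, so the absence of such cycles in $\calG$ forces $\calG_\calP$ to have no cycles and hence to be a tree.

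The delicate direction is the converse: if $\calG_\calP$ is a tree and some cycle $C = v_0 v_1 \cdots v_l v_0$ of $\calG$ crosses region boundaries, then its projected region trace, after collapsing consecutive repeats, is a non-trivial closed walk in the tree $\calG_\calP$. Using the standard distance-function argument, every non-trivial closed walk in a tree must contain an immediate backtrack $R \to R' \to R$. The two edges of $C$ realizing that backtrack are distinct edges of $\calG$ (since $\calG$ is simple and $C$ traverses each edge at most once), so they yield two parallel edges of $\calG_\calP$ between $R$ and $R'$, contradicting $\calG_\calP$ being a tree.

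With the characterization in hand, existence is immediate: the trivial partition $\set{\calN}$ is a tree partition and, since $\calN$ is finite, the poset $(\{\text{tree partitions}\},\succeq)$ contains a $\succeq$-maximal element above it. For uniqueness, suppose $\calP^1 = \set{\calN^1_i}_i$ and $\calP^2 = \set{\calN^2_j}_j$ are both irreducible tree partitions, and form their common refinement
\[
    \calP^1 \wedge \calP^2 \ :=\ \set{\calN^1_i \cap \calN^2_j \,:\, \calN^1_i \cap \calN^2_j \neq \emptyset}.
\]
Any cycle of $\calG$ sits inside some $\calN^1_i$ and inside some $\calN^2_j$, hence inside $\calN^1_i \cap \calN^2_j$, so by the characterization $\calP^1 \wedge \calP^2$ is itself a tree partition. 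Since $\calP^1 \wedge \calP^2 \succeq \calP^1$ and $\calP^1 \wedge \calP^2 \succeq \calP^2$, the maximality of $\calP^1$ and $\calP^2$ forces $\calP^1 = \calP^1 \wedge \calP^2 = \calP^2$.

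The main obstacle I expect is the ``$\calG_\calP$ tree $\Rightarrow$ cycles confined to a single region'' direction of the lemma: one must handle the multigraph structure of $\calG_\calP$ carefully, distinguishing self-loops at a super-node (arising from edges internal to a region, which should be ignored when collapsing the cycle's region trace) from genuine parallel edges between super-nodes, and then invoke the backtracking property of closed walks in trees. Once that lemma is nailed down, the remainder is clean lattice-theoretic bookkeeping.
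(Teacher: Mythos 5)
Your route is genuinely different from the paper's (which explicitly constructs the candidate partition as the connected components left after deleting all classical bridges, shows its reduced graph is a tree, and then uses Menger's theorem to show it refines every tree partition), but your key lemma is false as stated, and the direction you label ``easy'' is exactly where it breaks. Take $\calG$ to be the path $1-2-3$ and $\calP=\set{\set{1,3},\set{2}}$: the graph $\calG$ has no cycles at all, so the condition ``every cycle lies in a single region'' holds vacuously, yet $\calG_\calP$ consists of two super nodes joined by two parallel edges and is not a tree. The culprit is that a region of a general partition need not induce a connected subgraph, so a cycle of the multigraph $\calG_\calP$ need not lift ``through intra-region paths'' to a cycle of $\calG$; in the example there is no path from $1$ to $3$ inside the region $\set{1,3}$. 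The correct characterization is: $\calP$ is a tree partition if and only if every region induces a connected subgraph \emph{and} every cycle of $\calG$ lies inside a single region. (Your ``delicate'' direction, via backtracking of closed walks in a tree, is correct as written.)

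The gap propagates to the one place you invoke that direction: to conclude that $\calP^1\wedge\calP^2$ is a tree partition you must also show that its regions $\calN^1_i\cap\calN^2_j$ are connected, which does not follow from the cycle condition. It is true, but needs an argument of its own. First, every region of a tree partition is connected: a path of $\calG$ joining two induced components of a region $R$ projects to a closed walk of positive length at the super node $R$ in the tree $\calG_\calP$; such a walk must re-traverse the (unique, since a tree has no parallel edges) multigraph edge by which it first leaves $R$, forcing the original simple path to repeat an edge of $\calG$. Second, if $A\in\calP^1$ and $B\in\calP^2$ are regions with $A\cap B$ disconnected, a path inside the connected set $A$ joining two components of $A\cap B$ must leave $B$ and return; projecting it onto the tree $\calG_{\calP^2}$ gives the same contradiction. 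With these two facts supplied, your common-refinement argument and the lattice-theoretic conclusion go through, and the approach is a clean alternative to the paper's constructive proof; as written, however, the central lemma is not correct.
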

\iftoggle{isreport}{
See Appendix \ref{proof:uniqueness} for a proof.
}{
See our online report for a proof \cite{report}.
}

\begin{figure}[t]
\centering
\iftoggle{isarxiv}{
\includegraphics[width=.275\textwidth]{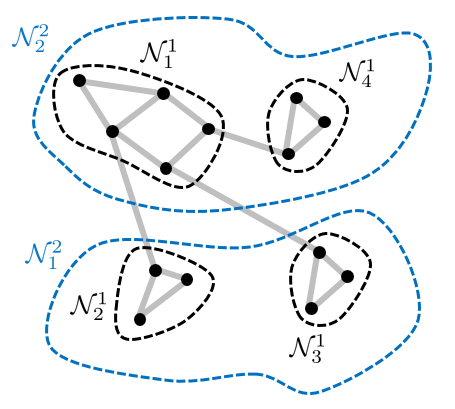}
}{
\includegraphics[width=.275\textwidth]{figs/partition_order.png}
}
\caption{An illustration of the partial order $\succeq$ over tree partitions. The partition $\calP^1=\set{\calN_1^1, \calN_2^2, \calN_3^1, \calN_4^1}$ is finer than $\calP^2=\set{\calN_1^2, \calN_2^2}$.}\label{fig:partition_order}
\end{figure}
We remark that our proof of Proposition \ref{prop:unique_irreducible} not only shows that the irreducible tree partition of $\calG$ is unique, but also implies that the problem of computing this unique irreducible tree partition reduces to finding all bridges of $\calG$. As a result, we can adapt Tarjan's bridge-finding algorithm \cite{tarjan1974note} to devise an algorithm that computes the irreducible tree partition of $\calG$ in $O(n+m)$ time. This is summarized in Algorithm \ref{alg:tree_partition}. Interested readers are referred to the proof of Proposition \ref{prop:unique_irreducible} in\iftoggle{isreport}{ Appendix \ref{proof:uniqueness}}{~\cite{report}} for more details on the algorithm.

\begin{algorithm}
\caption{Irreducible Tree Partition Finding Algorithm}\label{alg:tree_partition}
\begin{algorithmic}[1]
    \STATE Execute Tarjan's bridge-finding algorithm \cite{tarjan1974note} on $\calG=(\calN,\calE)$ to compute the set of bridges $\calE_b$.
    \STATE Remove edges in $\calE_b$ from $\calE$ to form the partitioned graph $(\calN, \calE\bs\calE_b)$.
    \STATE Breadth-first search on the partitioned graph $(\calN,\calE\bs\calE_b)$ to compute its set of connected components $\calP:=\set{C_1, C_2, \ldots, C_k}$. Return $\calP$.
     \end{algorithmic}
\end{algorithm}

To summarize, we have shown that each graph $\calG$ has a unique irreducible tree partition, which can be computed in linear time. Thus, to simplify the terminology, whenever we say the tree partition of $\calG$ in the sequel, we always refer to its irreducible partition. 

\section{Summary of Results}\label{section:summary}
In this section we state our main result, which analytically characterizes line failure localization.  It summarizes the technical results in the two sections that follow.

Our main result applies in contexts where the system is operating under \emph{normal} conditions, i.e., when the following two assumptions are satisfied: (a) the injection is \textit{island-free} (see Definition \ref{defn:island_free} for a formal definition); and
(b) the grid is \textit{participating} with respect to its power balance rule (see Definition \ref{defn:participating} for a formal definition). Moreover, to address certain  pathological cases, we add a perturbation drawn from certain probability measure $\mu$ to the line susceptances and assume $\mu$ is absolutely continuous with respect to the Lebesgue measure $\calL_m$ on $\R^m$ (see Section \ref{section:bridge}). 

\begin{thm}\label{thm:summary}
For a power network operating under normal conditions, $K_{e\hat{e}}\neq 0$ almost surely in $\mu$ if and only if:
\begin{enumerate}
\item $e,\hat{e}$ are within a common tree partition region and $e,\hat{e}$ belong to the same cell; or
\item $e$ is a bridge.
\end{enumerate}
\end{thm}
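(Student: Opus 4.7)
The plan is to split the argument by whether $e$ is a bridge, since these two regimes require fundamentally different tools: the non-bridge case admits the spanning-forest formula of Theorem~\ref{thm:redist_factor}, while the bridge case disconnects $\calG'=(\calN,\calE\bs\set{e})$ and therefore depends on the power balance rule $\calB$. In both cases the ``only if'' direction---showing $K_{e\hat{e}}=0$ when neither condition holds---I expect to be deterministic in $B$, while the ``if'' direction will rely on the absolute continuity of $\mu$ with respect to $\calL_m$ to rule out pathological cancellations. Concretely, $K_{e\hat{e}}$ can be written as a ratio of polynomials in the $B_{e'}$'s, so proving ``only if'' amounts to exhibiting cancellations that force the numerator to vanish identically, whereas proving ``if'' amounts to showing the numerator is a nonzero polynomial, whose vanishing locus is then $\mu$-null.

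For the non-bridge case I would expand the numerator of $K_{e\hat{e}}$ via Theorem~\ref{thm:redist_factor} into two signed families of spanning forests. When $e=(i,j)$ and $\hat{e}=(w,z)$ lie in different tree-partition regions, or in the same region but in different cells, there is a \emph{separator}: a bridge in the first situation and a cut vertex between the two blocks in the second. Using this separator I would construct a sign-reversing, weight-preserving bijection between $\calT(\set{i,w},\set{j,z})$ and $\calT(\set{i,z},\set{j,w})$: the separator forces both trees of any spanning forest to agree, on the far side of $e$, about which component contains each of $w$ and $z$, so swapping the labels of $w$ and $z$ converts one family into the other without changing the underlying edge set and therefore preserving $\chi(E)$. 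The numerator then vanishes identically, giving $K_{e\hat{e}}=0$. Conversely, when $e$ and $\hat{e}$ share a cell of the same region, I would exhibit a spanning forest in one family whose monomial $\prod_{e'\in E}B_{e'}$ has no counterpart in the other, proving that the numerator is a nonzero polynomial; since $\mu\ll\calL_m$, this yields $K_{e\hat{e}}\neq 0$ almost surely.

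For the bridge case, removing $e$ splits $\calG$ into two connected components $\calG_1,\calG_2$, and the island-free assumption ensures the original injection $p$ is not balanced on either. The participating assumption then guarantees that the balance rule $\calB$ produces a nonzero net adjustment $\Delta p$ on every bus of each component. I would then apply Theorem~\ref{thm:redist_factor} (or the spanning-tree identity underlying it) \emph{within each component} to express the post-rebalancing flow change on an arbitrary $\hat{e}$ as a polynomial ratio in the $B_{e'}$'s driven by $\Delta p$. Participation combined with island-freeness forces this polynomial to be not identically zero for every $\hat{e}$, and once again $\mu\ll\calL_m$ upgrades this to $K_{e\hat{e}}\neq 0$ almost surely.

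The main obstacle I anticipate is the construction of the sign-reversing bijection in the non-bridge case, because it must be canonical and must handle two quite different separator types---bridges between regions and cut vertices between blocks of a single region---in a uniform way, with a careful case analysis verifying that the swap is genuinely sign-reversing rather than the identity on some pathological forests. The bridge case is less combinatorial but requires care in tracking how the abstract balance rule $\calB$ interacts with the spanning-tree formula; isolating exactly which property of $\calB$ ensures the numerator polynomial is nonzero on every remaining line is precisely where the participating assumption does its work.
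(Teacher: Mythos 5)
Your proposal follows essentially the same route as the paper: zero entries come from separators (bridges between regions, cut vertices between cells) obstructing two-tree spanning forests in the formula of Theorem~\ref{thm:redist_factor}; nonzero entries within a cell come from exhibiting an uncancelled spanning-forest monomial via 2-connectedness and then invoking the measure-zero root set of a nonzero polynomial together with $\mu\ll\calL_m$; and the bridge case uses island-freeness plus the participating assumption to force a nonzero polynomial flow change. The one place you over-engineer is the anticipated "main obstacle" of a sign-reversing bijection: your own observation that the separator forces $w$ and $z$ into the same tree already shows that \emph{both} families $\calT(\set{i,w},\set{j,z})$ and $\calT(\set{i,z},\set{j,w})$ are empty (membership in either requires $w$ and $z$ to lie in different trees), so the numerator vanishes term-by-term and no bijection is needed --- this is exactly how the paper argues.
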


This result highlights that, for a practical system, the tree partition encodes rich information on how the failure of a line propagates through the network. We emphasize that: (i) the condition that $\mu$ is absolutely continuous with respect to $\calL_m$ is satisfied by almost all practical probability models for such perturbations (see Section \ref{section:non-bridge}); and (ii) the conditions that the injection is island-free and the grid is participating are satisfied in typical operating scenarios (see Section \ref{section:bridge}). Therefore, the conditions posed in Theorem \ref{thm:summary} are satisfied in practical settings.


Fig.~\ref{fig:matrix_block} shows how the tree partition is linked to the sparsity of the $K_{e\hat{e}}$ matrix through Theorem \ref{thm:summary}. It suggests that, compared to a full mesh transmission network consisting of single region/cell, it can be beneficial to \emph{temporarily} switch off certain lines so that more regions/cells are created and the impact of a line failure is localized within the cell in which the failure occurs. We study this network planning and design opportunity in Section~\ref{section:case_study}.

In the next two sections we prove Theorem \ref{thm:summary}.  We first characterize the power redistribution after the tripping of a non-bridge line in Section \ref{section:non-bridge}, and then consider the failure of bridge lines in Section \ref{section:bridge}.

\begin{figure}[t]
\centering
\iftoggle{isarxiv}{
\includegraphics[width=.25\textwidth]{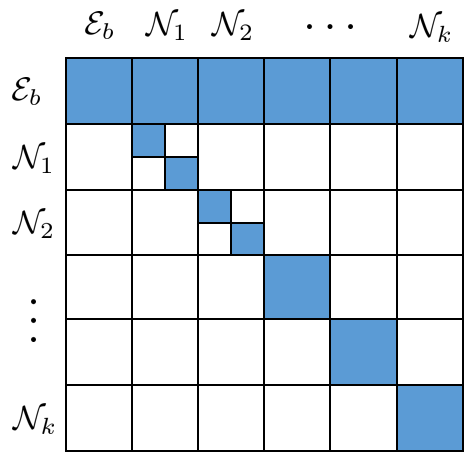}
}{
\includegraphics[width=.25\textwidth]{figs/matrix_block.png}
}
\caption{Non-zero entries of the $K_{e\hat{e}}$ matrix (as represented by the dark blocks) for a graph with tree partition $\set{\calN_1,\calN_2,\cdots,\calN_k}$ and bridge set $\calE_b$. The small blocks represent cells inside the regions.}\label{fig:matrix_block}
\end{figure}

\section{Non-Bridge Failures are Localizable}\label{section:non-bridge}
In this section, we characterize the power flow redistribution under the DC model when a non-bridge line is tripped and show that such failures are localized by the tree partition regions. More specifically, we study how the tripping of a line $e\in\calE$ impacts the branch flow on a different edge $\hat{e}\in\calE$. Recall, we mentioned in Section \ref{section:model} that when $e$ is not a bridge, the power flow change on $\hat{e}$ due to tripping $e$ is given by
\begin{equation*}
\Delta P_{\hat{e}} = K_{e\hat{e}} \times P_e.
\end{equation*}
The impact of the line failure of $e$ can thus be characterized by the distribution factor $K_{e\hat{e}}$.

\subsection{Impact across Regions}
To start with, we consider the case where $\hat{e}$ does belong to the same region as $e$, that is, $\hat{e}$ either belongs to a different region or $\hat{e}$ is a bridge. This case was studied in~\cite{guo2017monotonicity} where the following result was shown.

\begin{prop}\label{prop:disjoint_island}
Consider a power network $\calG$ with tree partition $\set{\calN_1,\calN_2,\cdots, \calN_k}$. Let $e, \hat{e}\in\calE$ be two different edges such that $e$ is not a bridge. Then,
$$
K_{e\hat{e}} = 0
$$
for any $\hat{e}$ that is not in the same the region containing $e$.
\end{prop}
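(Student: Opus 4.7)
My plan is to invoke Theorem~\ref{thm:redist_factor}, which expresses $K_{e\hat{e}}$ as a ratio whose numerator is the signed difference of two weighted spanning-forest sums, taken over $\calT(\{i,w\},\{j,z\})$ and $\calT(\{i,z\},\{j,w\})$. Rather than computing the two sums and comparing them, I will prove the stronger statement that \emph{both} of these forest classes are in fact empty, from which $K_{e\hat{e}}=0$ follows immediately.

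To this end, I will argue by contradiction. Suppose there exists $E\in\calT(\{i,w\},\{j,z\})$, with trees on vertex sets $S_1\supseteq\{i,w\}$ and $S_2\supseteq\{j,z\}$. Because $E$ is a spanning $2$-forest, both induced subgraphs $\calG[S_1]$ and $\calG[S_2]$ must be connected. The central structural observation I will exploit is that $\calG_\calP$ is a tree, so any two distinct regions are joined by a unique sequence of bridges; since each bridge disconnects $\calG$ when removed, every simple path in $\calG$ between vertices of different regions must traverse exactly this bridge sequence.

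I then split on the location of $\hat{e}$. If $\hat{e}$ lies within some region $\calN_b\neq\calN_a$, the bridge path from $\calN_a$ to $\calN_b$ in $\calG_\calP$ begins with a unique first bridge $b_1$; the existence of a path from $i$ to $w$ in $\calG[S_1]$ forces both endpoints of $b_1$ into $S_1$, while the required $j$-to-$z$ path in $\calG[S_2]$ simultaneously forces them into $S_2$, contradicting $S_1\cap S_2=\emptyset$. If instead $\hat{e}$ is itself a bridge, then $\calG_\calP\setminus\{\hat{e}\}$ decomposes into two subtrees and $\calN_a$ sits in exactly one of them; whichever of $\calG[S_1],\calG[S_2]$ must cross $\hat{e}$ to reach the opposite subtree would have to contain $\hat{e}$ in its induced subgraph, which is impossible because $w\in S_1$ and $z\in S_2$ place $\hat{e}$ in the cut between the two sides. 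A symmetric argument, obtained by swapping the roles of $w$ and $z$, disposes of $\calT(\{i,z\},\{j,w\})$.

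The main obstacle I anticipate is the sub-case in which $\hat{e}$ is a bridge with exactly one endpoint inside $\calN_a$: here three of the four vertices $i,j,w,z$ lie in $\calN_a$, the natural $w\leftrightarrow z$ symmetry is partially broken, and one must verify both orderings $\{i,w\}/\{j,z\}$ and $\{i,z\}/\{j,w\}$ separately against the two possible sides of $\calG_\calP\setminus\{\hat{e}\}$ on which $\calN_a$ could sit. Once this fiddly case analysis is dispatched, the empty-forest conclusion combined with Theorem~\ref{thm:redist_factor} immediately yields $K_{e\hat{e}}=0$.
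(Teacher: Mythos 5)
Your proof is correct: both forest classes $\calT\paren{\set{i,w},\set{j,z}}$ and $\calT\paren{\set{i,z},\set{j,w}}$ are indeed empty, since any path joining vertices separated by a (classical) bridge must traverse that bridge, which forces both of its endpoints into each of the two disjoint trees (or, when $\hat{e}$ is itself the bridge, forces the crossing tree to contain an edge whose endpoints lie in different trees) — a contradiction in every configuration, including the sub-case where $\hat{e}$ has an endpoint in $e$'s region. The paper defers this proposition to the cited prior work rather than proving it, but your route — Theorem~\ref{thm:redist_factor} plus emptiness of the two spanning-forest classes — is exactly the argument the paper itself uses for the analogous cell-level lemma, so this is essentially the same approach.
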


This result implies that, when a non-bridge line $e$ fails, any line $\hat{e}$ not in the same tree partition region as $e$ will not be affected, regardless of whether $\hat{e}$ is a bridge. In other words, non-bridge failures cannot propagate through the boundaries formed by the tree partition regions of $\calG$.

\subsection{Impact within Regions}
It is reasonable, based on physical intuition, to expect that the converse to the above result is also true. That is, if $e, \hat{e}$ belong to a common region (and thus $e$ is not a bridge), we would expect $K_{e\hat{e}}\neq 0$. This, however, is not always the case for two reasons: (a) some vertices within a tree partition region may ``block'' spanning forests from $e$ to $\hat{e}$; (b) the graph $\calG$ may be too symmetric. We elaborate on these two scenarios separately in the following two subsections.

\subsubsection{Block Decomposition}
To illustrate the issue described above, we use the following example to demonstrate that certain vertices within a tree partition region may ``block'' spanning forests from $e$ to $\hat{e}$.

\begin{figure}
\centering
\iftoggle{isarxiv}{
\subfloat[]{\includegraphics[height=2cm]{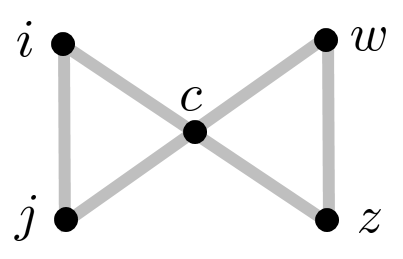}
}
\hfill
\subfloat[]{\includegraphics[height=2cm]{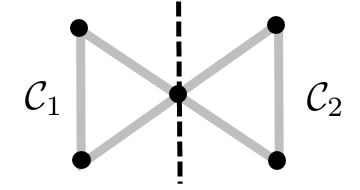}
}
}{
\subfloat[]{\includegraphics[height=2cm]{figs/butterfly.png}
}
\hfill
\subfloat[]{\includegraphics[height=2cm]{figs/butterfly_decomp.png}
}
}
\caption{(a) A butterfly network. (b) The block decomposition of the butterfly network into cells $\calC_1$ and $\calC_2$.}\label{fig:butterfly}
\end{figure}

\begin{eg}\label{eg:butterfly}
Consider a butterfly network shown in Fig.~\ref{fig:butterfly}(a) and pick $e=(i,j)$ and $\hat{e}=(w,z)$ from the butterfly wings. It is not hard to see that any tree containing $i,w$ simultaneously must contain the body vertex $c$, and so is any tree containing $j,z$. As a result, we see $\calT\paren{\set{i,w},\set{j,z}}$ is empty. Similarly we also know $\calT\paren{\set{i,z},\set{j,w}}$ is empty. By Theorem \ref{thm:redist_factor}, we then have $K_{e\hat{e}}=0$. 
\end{eg}

The issue with Example \ref{eg:butterfly} is that the butterfly graph is not 2-connected. In other words, it is possible that the removal of a single vertex (in this case the body vertex $c$) can disconnect the original graph. We refer to such a vertex as a \emph{cut vertex} following graph-theoretic convention. From Example \ref{eg:butterfly}, we see that cut vertices may ``block'' spanning forests between edges in a tree partition region. That is, two disjoint trees cannot pass through a common cut vertex.

Fortunately, we can precisely capture such an effect by decomposing each tree partition region further through the classical \emph{block decomposition}~\cite{harary1969graph}. Recall that the block decomposition of a graph is a partition of its \emph{edges} such that each partitioned component is 2-connected. See Fig.~\ref{fig:butterfly}(b) for an illustration. We refer to such components as \emph{cells} to reflect the fact that they are smaller parts within a tree partition region. Note that two different cells within a tree partition region may share a common vertex as the block decomposition is over graph edges. The block decomposition of a graph always exists and can be found in linear time~\cite{tarjan1985efficient}.

\begin{lemma}
Consider a power network $\calG$ and let $e,\hat{e}$ be two distinct edges within the same tree partition region but across different cells. Then $K_{e\hat{e}}=0$.
\end{lemma}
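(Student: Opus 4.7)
The approach is to apply Theorem \ref{thm:redist_factor} and show that both spanning-forest sums in its numerator vanish. Writing $e=(i,j)$ and $\hat{e}=(w,z)$, it suffices to prove that $\calT(\set{i,w},\set{j,z})=\emptyset$ and $\calT(\set{i,z},\set{j,w})=\emptyset$, since Theorem \ref{thm:redist_factor} then gives $K_{e\hat{e}}=0$.

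The key structural ingredient is to locate a cut vertex $c$ of $\calG$ that separates the two cells in a strong sense. Since $C_1$ and $C_2$ are distinct blocks of the same tree partition region $R$, the block-cut tree of $R$ has a nontrivial path from $C_1$ to $C_2$, so one can choose a cut vertex $c$ of $R$ for which $V(C_1)\setminus\set{c}$ and $V(C_2)\setminus\set{c}$ lie in distinct connected components of $R-c$. I would then verify that $c$ is also a cut vertex of $\calG$ and that these two vertex sets remain in distinct components of $\calG-c$: any potential re-connection through other regions would have to traverse bridges incident to $R$, but because $\calG_\calP$ is a tree, each adjacent region attaches to $R$ via a single bridge, so every subtree of $\calG_\calP$ hanging off $R$ is rooted at exactly one component of $R-c$ and cannot merge different components. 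Call the two resulting components of $\calG-c$ containing $V(C_1)\setminus\set{c}$ and $V(C_2)\setminus\set{c}$ by $A$ and $B$.

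Given any candidate $E=T_1\cup T_2\in\calT(\set{i,w},\set{j,z})$ with $\set{i,w}\subseteq V(T_1)$ and $\set{j,z}\subseteq V(T_2)$, the vertex $c$ lies in exactly one of the two trees. If $c\in V(T_1)$, then $c\notin V(T_2)$, so $j\neq c$ and $z\neq c$; hence $j\in A$ and $z\in B$. But all edges of $T_2$ avoid $c$, so the connectedness of $T_2$ forces $V(T_2)$ to lie in a single component of $\calG-c$, contradicting $j\in A$ and $z\in B$. The case $c\in V(T_2)$ is symmetric using $\set{i,w}$. Thus $\calT(\set{i,w},\set{j,z})=\emptyset$, and emptiness of $\calT(\set{i,z},\set{j,w})$ follows by swapping $w$ and $z$. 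The cleanest aspect of this argument is that placing $c$ in one tree automatically excludes $c$ from the endpoint requirements of the other tree, sidestepping any case analysis of whether $c$ coincides with $i,j,w,$ or $z$. The main obstacle I expect is the verification that a cut vertex of the region lifts to a cut vertex of $\calG$ separating the relevant vertex sets, which is precisely where the tree structure of $\calG_\calP$ is essentially used.
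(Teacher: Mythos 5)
Your proof is correct and follows essentially the same route as the paper's: both reduce the claim, via Theorem \ref{thm:redist_factor}, to showing $\calT\paren{\set{i,w},\set{j,z}}$ and $\calT\paren{\set{i,z},\set{j,w}}$ are empty because a cut vertex separating the two cells makes two disjoint trees impossible. Your write-up is in fact slightly more careful than the paper's one-line appeal to the block-decomposition path property, since you explicitly check that the separating cut vertex of the region remains a separator in all of $\calG$ using the tree structure of $\calG_\calP$.
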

\begin{proof}
Let $e=(i,j)$ be within the cell $\calC_e$ and $\hat{e}=(w,z)$ be within the cell $\calC_{\hat{e}}$. It is a classical result that any path originating from a vertex within $\calC_e$ to a vertex within $\calC_{\hat{e}}$ must pass through a common cut vertex in $\calC_e$ \cite{harary1969graph}. As a result, it is impossible to find two disjoint trees in $\calG$ containing $i,w$ and $j,z$ respectively. Thus $\calT\paren{\set{i,w}, \set{j,z}}$ is empty. Similarly $\calT\paren{\set{i,z}, \set{j,w}}$ is also empty. By Theorem \ref{thm:redist_factor}, we then know $K_{e\hat{e}}=0$.
\end{proof}

\subsubsection{Symmetry}
Next, we demonstrate that graph symmetry\footnote{By symmetry, we refer to graph automorphisms. The exact meaning of symmetry, however, is not important for our purpose.} may also block the propagation of failures.  Again, we illustrate the issue with a simple example. 

\begin{eg}\label{eg:symmetry}
Consider the complete graph on $n$ vertices and pick $e=(i,j)$ and $\hat{e}=(w,z)$ such that $e$ and $\hat{e}$ do not share any common endpoints: $i\neq j\neq w\neq z$. Assume the line susceptances are all $1$. By symmetry, it is easy to see that there is a bijective correspondence between $\calT\paren{\set{i,w},\set{j,z}}$ and $\calT\paren{\set{i,z},\set{j,w}}$, and thus
$$
\sum_{E\in \calT(\set{i,w},\set{j,z})}\chi(E)-\sum_{E\in \calT(\set{i,z},\set{j,w})}\chi(E) = 0.
$$
By Theorem \ref{thm:redist_factor}, we then have $K_{e\hat{e}}=0$.
\end{eg}

A complete graph is 2-connected and thus forms a cell. Example \ref{eg:symmetry} shows that even if the two edges $e,\hat{e}$ are within the same cell, when the graph $\calG$ is rich in symmetries, it is still possible that a failure of $e$ does not impact $\hat{e}$. Nevertheless, this issue is not critical as such symmetry almost never happens in practical systems because of heterogeneity in line susceptances. In fact, even if the system is originally symmetric, an infinitesimal change on the line susceptances is enough to break the symmetry, as we now show.

More formally, we adopt a form of perturbation analysis on the line susceptances. That is, instead of requiring the line susceptance to be fixed values $B_e$, we add a random perturbation $\omega=(\omega_e:e\in\calE)$ drawn from a probability measure $\mu$. Such perturbations can come from manufacturing error or measurement noise. The perturbed system\footnote{We assume the perturbation ensures $B_e+\omega_e > 0$ for any $e\in\calE$ so that the new susceptance is physically meaningful.} shares the same topology (and thus tree partition) as the original system, yet admits perturbed susceptances $B+\omega$. The randomness of $\omega$ implies the factor $K_{e\hat{e}}$ is now a random variable. Let $\calL_m$ be the Lebesgue measure on $\R^m$. Recall $\mu$ is \emph{absolutely continuous} with respect to $\calL_m$ if for any measurable set $S$ such that $\calL_m(S)=0$, we have $\mu(S)=0$.

\begin{prop}\label{prop:same_region}
Consider a power network $\calG$ under perturbation $\mu$ and let $e,\hat{e}$ be two distinct edges within the same cell. If $\mu$ is absolutely continuous with respect to $\calL_m$,
$$
\mu\paren{K_{e\hat{e}}\neq 0} = 1.
$$
\end{prop}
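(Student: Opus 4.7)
The plan is to apply Theorem~\ref{thm:redist_factor}, which expresses $K_{e\hat e}$ as a rational function of the perturbed susceptances $B+\omega$, and then show that the numerator
\[
N(x):=\sum_{E\in \calT(\{i,w\},\{j,z\})}\chi(E)-\sum_{E\in \calT(\{i,z\},\{j,w\})}\chi(E)
\]
is \emph{not} identically zero when viewed as a polynomial in the susceptance variables $x=(x_{e'})_{e'\in\calE}$. If so, the zero set $\{x\in\R^m:N(x)=0\}$ is a proper algebraic variety and hence has Lebesgue measure zero, so by absolute continuity of $\mu$ with respect to $\calL_m$ we get $\mu(N(B+\omega)=0)=0$. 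The denominator in Theorem~\ref{thm:redist_factor} is a sum of positive monomials and is strictly positive for any vector of positive susceptances, and $B_{\hat e}+\omega_{\hat e}>0$, so $K_{e\hat e}=0$ if and only if $N(B+\omega)=0$; this will yield $\mu(K_{e\hat e}\neq 0)=1$.

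To show $N\not\equiv 0$ as a polynomial, I would first note that each spanning forest $E$ contributes a distinct squarefree monomial $\prod_{e'\in E}x_{e'}$, and that no $E$ can simultaneously lie in $\calT(\{i,w\},\{j,z\})$ and in $\calT(\{i,z\},\{j,w\})$: the two trees of $E$ uniquely determine a bipartition of $\calN$, and a short case analysis (covering also the degenerate situations in which some of $i,j,w,z$ coincide, where one of the two index pairs has nonempty intersection and thus makes the corresponding $\calT$-set empty by the footnote following the definition of $\calT$) rules out joint membership. Consequently, no monomials cancel between the two sums, and $N\equiv 0$ if and only if both $\calT$-sets are empty.

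The main step, and the one I expect to be the obstacle, is therefore exhibiting an element of $\calT(\{i,w\},\{j,z\})\cup\calT(\{i,z\},\{j,w\})$. Here I would use that the cell containing $e$ and $\hat e$ is $2$-connected, so some cycle $C$ of $\calG$ contains both edges. Removing $e$ and $\hat e$ from $C$ leaves two internally vertex-disjoint paths, and depending on the orientation of $C$ these paths have endpoint pairs $\{i,w\},\{j,z\}$ or $\{i,z\},\{j,w\}$. Let $V_1,V_2$ be the two vertex sets of these paths. To extend $(V_1,V_2)$ to a bipartition of $\calN$ whose two sides are both connected in $\calG$, I would attach each connected component of $\calG\setminus(V_1\cup V_2)$ to one of the sides to which it has an incident edge (such an edge exists by connectedness of $\calG$). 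Taking any spanning tree of each side then yields a spanning forest in the appropriate $\calT$-set, closing the argument.
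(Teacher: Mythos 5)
Your proposal is correct and follows essentially the same route as the paper's proof: express $K_{e\hat e}$ via Theorem~\ref{thm:redist_factor}, use $2$-connectedness of the cell to find a simple cycle through $e$ and $\hat e$ whose two arcs extend to a spanning forest witnessing that the numerator polynomial is not identically zero, verify that $\calT(\set{i,w},\set{j,z})$ and $\calT(\set{i,z},\set{j,w})$ are disjoint so no monomials cancel, and conclude by the measure-zero property of the zero set of a nonzero polynomial together with absolute continuity of $\mu$. Your handling of the degenerate endpoint-sharing cases and your explicit component-attachment construction for extending the two paths to a connected bipartition are slightly more detailed than the paper's ``iteratively add edges'' step, but the argument is the same.
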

\iftoggle{isreport}{
See Appendix \ref{proof:same_region} for a proof.
}{
See our online report for a proof \cite{report}.
}

Note that, by Radon-Nikodym theorem~\cite{royden2010real}, the probability measure $\mu$ is absolutely continuous with respect to $\calL_m$ if and only if it affords a probability density function. In other words, there are no requirements on either the power or the correlation of the perturbation for Proposition \ref{prop:same_region} to apply. The only necessary condition is that the measure $\mu$ cannot contain Dirac masses. As a result, we see that for almost all practical probability models of such perturbation (e.g.,~truncated Gaussian noise with arbitrary covariance, bounded uniform distribution, truncated Laplace distribution), $K_{e\hat{e}}\neq 0$ for $e,\hat{e}$ within the same cell almost surely, no matter how small the perturbation is. 

\section{Bridge Failures Propagate}\label{section:bridge}
The remaining case necessary to prove Theorem \ref{thm:summary} is a characterization of the power flow redistribution when a bridge is tripped.  Here, we show that such failures generally propagate through the entire network.

Recall that, when $e$ is not a bridge, the branch flow change on $\hat{e}$ due to tripping $e$ is given by
\begin{equation}\label{eqn:power_change}
\Delta P_{\hat{e}} = K_{e\hat{e}} \times P_e.
\end{equation}
When $e$ is a bridge, tripping $e$ disconnects the power grid into two components, and the power in each connected component may not be balanced. Such power imbalance can be resolved by a power balance rule $\calB$ (see \cite{bernstein2014power,bienstock2010n-k,bienstock2011control,bernstein2014vulnerability} for examples of such rules), which together with the DC model uniquely determines the new branch flows (and thus the branch flow change $\Delta P_{\hat{e}}$). We extend the definition of $K_{e\hat{e}}$ through \eqref{eqn:power_change} to the case where $e$ is a bridge and call it the \emph{extended line outage distribution factor}. Besides being related to the $B$ and $C$ matrices, this extended $K_{e\hat{e}}$ factor also depends on the power injection $p$ and the power balance rule $\calB$.

Power networks without microgrids typically operate in ``island-free mode'' as islanding (i.e.,~isolating a part of the grid power flow from the rest of the network) poses a safety hazard to utility maintainence and repair personnel and potentially leads to damage of the infrastructure \cite{ropp1999analysis}. Formally we define the concept of island-free as follows. 

\begin{defn}\label{defn:island_free}
For a power network $\calG$, an injection $p$ is said to be \textbf{island-free} if under the injection $p$, the branch flow $P$ in $\calG$ satisfies $P_e\neq0$ for any bridge $e$.
\end{defn}

Intuitively, island-free means that no part of the grid balances its own power.

 For a grid $\calG$ operating with the balance rule $\calB$, when a net power imbalance of level $M$ is detected, the rule $\calB$ selects a set of \emph{participating} buses from $\calG$ and adjust their injections properly to cancel $M$. The rules studied in the literature \cite{soltan2015analysis,bernstein2014power,bienstock2010n-k,bienstock2011control,bernstein2014vulnerability} are typically \emph{linear} in the sense that, for any participating bus $j$, the injection adjustment $\delta p_j$ dictated by the rule $\calB$ is linear in $M$.

Denote the set of participating buses of $\calB$ as $\calN_{\calB}$ and let $n_b=\abs{\calN_{\calB}}$. The rule $\calB$ can then be interpreted as a linear transformation from $\R$ to $\R^{n_b}$ given by
$$
\calB(M) = \paren{\alpha_j M: j\in\calN_{\calB}},
$$
where $\alpha_j$ are positive constants that sum to $1$. For instance, if the imbalance is uniformly absorbed by the generators as in \cite{bernstein2014power,soltan2015analysis}, we have $\calN_{\calB}$ to be the set of generators and $\alpha_j=1/G$, where $G$ is the number of generators. As another example, if the imbalance is regulated through Automatic Generation Control, then we have $\calN_{\calB}$ to be set of controllable generators and $\alpha_j$ are the normalized generator participation factors.

\begin{defn}\label{defn:participating}
For a power grid $\calG$ with tree partition $\calP=\set{\calN_1,\calN_2,\cdots, \calN_k}$ operating under power balance rule $\calB$, a region $\calN_i$ with block decomposition $\set{\calC_{1}^i, \calC_2^i, \cdots, \calC_{m_i}^i}$ is said to be a \textbf{participating region} if $\calN_{\calB}\cap \calC^i_j$ contains a non-cut-vertex for $j=1,2,\ldots, m_i$. The grid $\calG$ is said to be a \textbf{participating grid} if $\calN_i$ is participating for  $i=1,2,\ldots,k$.
\end{defn}

A power grid is usually participating. For instance, if $\calB$ allocates the power imbalance uniformly to the generators as in \cite{bernstein2014power,soltan2015analysis}, as long as each cell in the network contains a generator that is not at the ``gate'' (i.e., it is not a cut vertex\footnote{A cut vertex is always at the ``gate'' of a cell as all paths from outside towards vertices in the cell must pass through a cut vertex\cite{harary1969graph}.}), the grid is participating. If in addition load-side participation is exploited and $\calB$ also allocates the power imbalance to controllable loads, then the power grid is participating if the controllable loads are ubiquitously deployed to all buses. 

Given the above, we now state our main result.

\begin{prop}\label{prop:bridge_prop}
Consider a participating power network $\calG$ with island-free injection $p$ and under line susceptance perturbation $\mu$ that is absolutely continuous with respect to $\calL_m$. If $e$ is a bridge of $\calG$, then for any $\hat{e}\neq e$, we have
$$
\mu(K_{e\hat{e}}\neq 0)=1.
$$
\end{prop}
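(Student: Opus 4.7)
The plan is to reduce the computation of $\Delta P_{\hat e}$ to a DC flow problem on a single island and then mirror the perturbation/algebraic-geometry argument used to prove Proposition~\ref{prop:same_region}. Write $e = (i,j)$ with $i \in \calN_1$ and $j \in \calN_2$, where $\calG_1$ and $\calG_2$ are the two connected components obtained by removing $e$, and assume without loss of generality that $\hat e \in \calG_1$. Let $\delta p^{(1)}$ denote the injection adjustment applied by $\calB$ inside $\calG_1$; by the linearity of $\calB$, its support lies in $\calN_\calB \cap \calN_1$, and since the pre-trip net injection in $\calG_1$ equals $P_e$ we have $\sum_k \delta p^{(1)}_k = -P_e$. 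Viewing the pre-trip flow $P_e$ on $e$ as an effective outflow at $i$, a direct use of the linearity of DC power flow shows that $\Delta P_{\hat e}$ equals the DC branch flow on $\hat e$ produced on $\calG_1$ by the balanced injection $q := \delta p^{(1)} + P_e\,\mathbf{1}_i$, where $\mathbf{1}_i$ is the indicator vector at bus $i$. The island-free assumption gives $P_e \neq 0$, so $q$ is nonzero and supported on $\{i\}\cup(\calN_\calB\cap\calN_1)$.

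Next, I would observe that, with $\calG_1$, $\calB$, and $p$ fixed, the map $B \mapsto \Delta P_{\hat e}$ is a rational function of the line susceptances: its denominator is the strictly positive sum of weights $\chi(E)$ over spanning trees $E$ of $\calG_1$, while its numerator is a polynomial obtained from Cramer's rule on the reduced Laplacian of $\calG_1$. Because a nonzero rational function vanishes only on a proper algebraic subvariety of $\R^m$, and hence on a set of Lebesgue measure zero, the absolute continuity of $\mu$ with respect to $\calL_m$ reduces the claim to showing that $\Delta P_{\hat e}$ is not the identically-zero rational function in $B$.

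To establish non-vanishing, I would expand $\Delta P_{\hat e} = \sum_k q_k\,\psi_k(\hat e)$, where each $\psi_k(\hat e)$ is a PTDF in $\calG_1$ that admits, through Theorem~\ref{thm:redist_factor} applied to a virtual edge, a signed spanning-forest ratio. The goal is to exhibit at least one susceptance vector at which this weighted sum does not vanish. The participating assumption is essential here: every cell of $\calG_1$ contains a non-cut-vertex in $\calN_\calB$, so $q$ places mass at a non-cut-vertex of each cell. After contracting every cell of $\calG_1$ other than the cell $\calC_{\hat e}$ containing $\hat e$ to its attaching cut vertex, the aggregated injection still places nonzero mass on at least two non-cut-vertices of $\calC_{\hat e}$, at which point Proposition~\ref{prop:same_region}, applied to a virtual edge between those two vertices, certifies that the induced flow on $\hat e$ is generically nonzero in $B$.

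The hard part will be this last step: ruling out all cancellations in the spanning-forest expansion for $\Delta P_{\hat e}$. The delicate point is that $i$ may itself be a cut vertex of $\calG_1$, in which case the $P_e$ mass at $i$ distributes to both sides of the cut and could in principle conspire with $\delta p^{(1)}$ to cancel the net input to $\calC_{\hat e}$. The non-cut-vertex strengthening in Definition~\ref{defn:participating} is precisely what prevents this degeneracy: on at least one side of any cut vertex inside $\calG_1$ some participating injection survives uncancelled, so the aggregated net input to $\calC_{\hat e}$ is a genuinely nontrivial polynomial in $B$. Any residual cancellation can then only arise from susceptance symmetries internal to $\calC_{\hat e}$, which are killed by the same algebraic-geometry argument already used in Proposition~\ref{prop:same_region}.
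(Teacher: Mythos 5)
Your overall route is the same as the paper's: reduce $\Delta P_{\hat e}$ to the DC flow on the surviving component under the shifted injection $q=\delta p^{(1)}+P_e\mathbf{1}_i$, observe that $B\mapsto \Delta P_{\hat e}$ is a rational function with positive spanning-tree denominator, and invoke absolute continuity of $\mu$ plus the measure-zero root set of a not-identically-zero polynomial; the participating assumption is used, as in the paper, to find a participating non-cut-vertex inside the cell of $\hat e$ and to build a spanning forest via a virtual edge and $2$-connectedness. The setup and the reduction to ``the numerator polynomial is not identically zero'' are correct.

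The genuine gap is in the last step, and it is exactly the step the paper spends most of its effort on. Writing $\Delta P_{\hat e}=\sum_k q_k\,\psi_k(\hat e)$ with several sources $k$ of possibly different signs (the cut vertices of $\calC_{\hat e}$ aggregate $P_e\mathbf{1}_i$ and the external adjustments, and their net signs are not controlled), exhibiting \emph{one} nonvanishing term $\psi_{n_1}(\hat e)$ via Proposition \ref{prop:same_region} does not show that the \emph{sum} is not identically zero: the other terms are also signed spanning-forest polynomials and could a priori cancel the one you constructed for every value of $B$. Your closing sentence---that residual cancellation ``can only arise from susceptance symmetries \ldots killed by the same algebraic-geometry argument''---is circular, since the algebraic-geometry argument only upgrades ``not identically zero'' to ``almost surely nonzero''; it cannot establish the former. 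The paper closes this hole with a combinatorial lemma: choosing the slack at the virtual source $\calN_t$ so that every source $i$ carries a \emph{positive} coefficient $\alpha_i M$, it shows $\calT_{\calE_1}(\{i,w\},\{\calN_t,z\})\cap\calT_{\calE_1}(\{j,z\},\{\calN_t,w\})=\emptyset$ for all $i,j$; since distinct forests give distinct monomials, a $+1$ monomial of one source can never be cancelled by a $-1$ monomial of another, so a single explicitly constructed forest suffices. You need this (or an equivalent non-cancellation mechanism) for the proof to go through. Two smaller issues: your claim that the contracted injection sits on ``at least two non-cut-vertices of $\calC_{\hat e}$'' is not what participation gives you (it gives one non-cut participating vertex; the external mass lands on cut vertices, and the paper's virtual edge accordingly joins the participating vertex to a cut vertex); and you never treat the case where $\hat e$ is itself a bridge, which falls outside the cell argument but is handled in one line in the paper since then $\Delta P_{\hat e}$ equals a sum of same-signed injection adjustments.
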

\iftoggle{isreport}{
See Appendix \ref{proof:bridge_prop} for a proof.
}{
See our online report for a proof \cite{report}.
}

The proof of Proposition \ref{prop:bridge_prop} provides interesting insights. In particular, both the participating grid and island-free injection conditions are in fact necessary. For instance, if the supply and demand is balanced within a region (which violates the island-free condition), then when the bridge connecting this region to other parts of the grid is tripped, the power is still balanced and thus the power flow within the region stays unchanged. This is precisely the case when a microgrid is connected to the power network: by disconnecting the microgrid from the main grid (which effectively trips a bridge in the original system), branch flows within both the microgrid and the main grid are not impacted.

Note that the result in Proposition \ref{prop:bridge_prop} can be easily extended to more general settings. It is straightforward to extend our proof to cover the case where neither condition is posed. We prefer not to present this generalization here because it would unnecessarily complicate the proof of the result without providing new insights.

\section{Localizing Cascading Failures}\label{section:case_study}
Our findings highlight a new approach for improving the robustness of the network. More specifically, Theorem~\ref{thm:summary} and the discussion in Section~\ref{section:non-bridge} suggest that it is possible to localize failure propagation by \emph{temporarily} switching off certain transmission lines based on the specific injection. This creates more, smaller areas where failure cascades can be contained.  We remark that the lines that are switched off are still part of the system. In cases where the newly created bridges are tripped, some of these lines should be switched on so the system are still connected.  The examples presented below are preliminary and we are still investigating how to optimally tradeoff the increased robustness from localized failures and the yet also increased vulnerability from having more bridges.
 
It is reasonable to expect that such an action may increase the stress on the remaining lines and, in this way, worsen the network congestion. In fact, one may expect that improved system robustness obtained by switching off lines \textit{always} comes at the price of increased congestion levels. In this section, we argue that this is not necessarily the case, and show that if the lines to switch off are selected properly, it is possible to \emph{improve the system robustness and reduce the congestion simultaneously.} We corroborate this claim by considering first a small stylized example and then an IEEE test system.

\subsection{Double-Ring Network}
\begin{figure}
\centering
\iftoggle{isarxiv}{
\subfloat[]{\includegraphics[width=.22\textwidth]{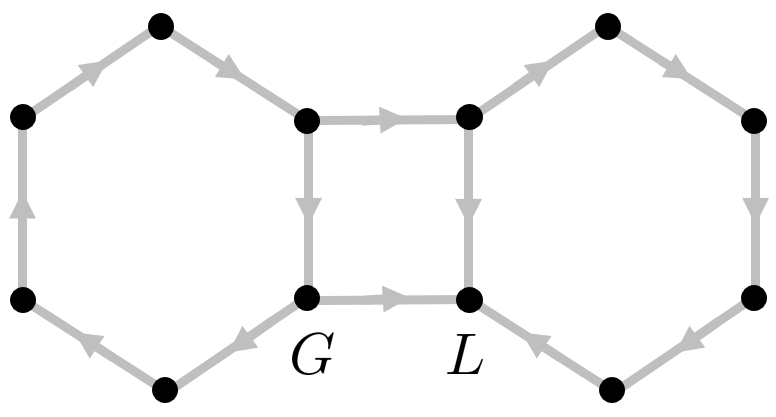}
}
\hspace{0.2cm}
\subfloat[]{\includegraphics[width=.22\textwidth]{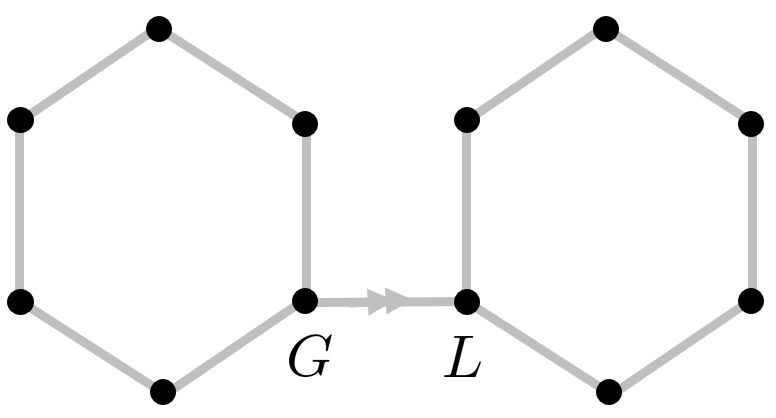}
}
}{
\subfloat[]{\includegraphics[width=.22\textwidth]{figs/double_ring.png}
}
\hspace{0.2cm}
\subfloat[]{\includegraphics[width=.22\textwidth]{figs/double_ring_trip.png}
}
}
\caption{(a) A double-ring network. $G$ is the generator bus and $L$ is the load bus. Arrows represent the original power flow. (b) The new network after removing an edge. Arrows represent the new power flow.}\label{fig:double_ring}
\end{figure}

Consider the double-ring network in Fig.~\ref{fig:double_ring}(a), which contains exactly one generator and one load bus. The original power flow on this network is also shown in Fig.~\ref{fig:double_ring}(a). Suppose we switch off the upper tie-line. The new network and the redistributed power flow are shown in Fig.~\ref{fig:double_ring}(b). In this example, by switching off one transmission line, the circulating flows inside the hexgons are removed and the overall network congestion is decreased. In fact, it is easy to show that the topology in Fig.~\ref{fig:double_ring}(b) minimizes the sum of (absolute) branch flows over all possible topologies.

\subsection{IEEE test system}

\begin{figure*}[t]
\centering
\iftoggle{isarxiv}{
\subfloat[Original influence graph.]{
\includegraphics[width=.83\textwidth]{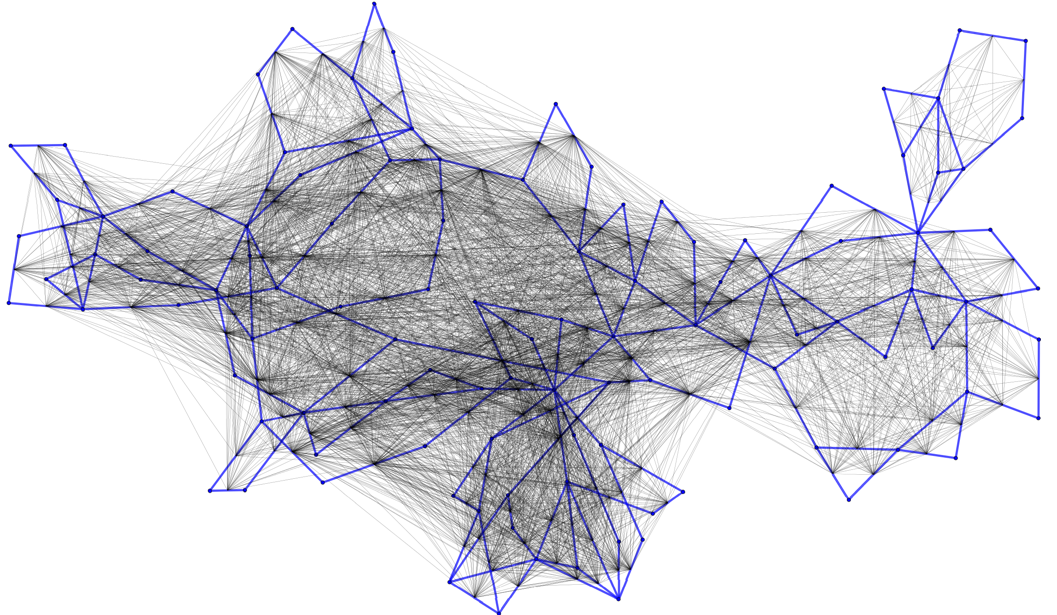}
}
\hfill
\subfloat[The influence graph after switching off $e_1$, $e_2$ and $e_3$. The black dashed line indicates the failure propagation boundary defined by the tree partition. The vertices $c_1$ and $c_2$ are cut vertices.]{
\includegraphics[width=.83\textwidth]{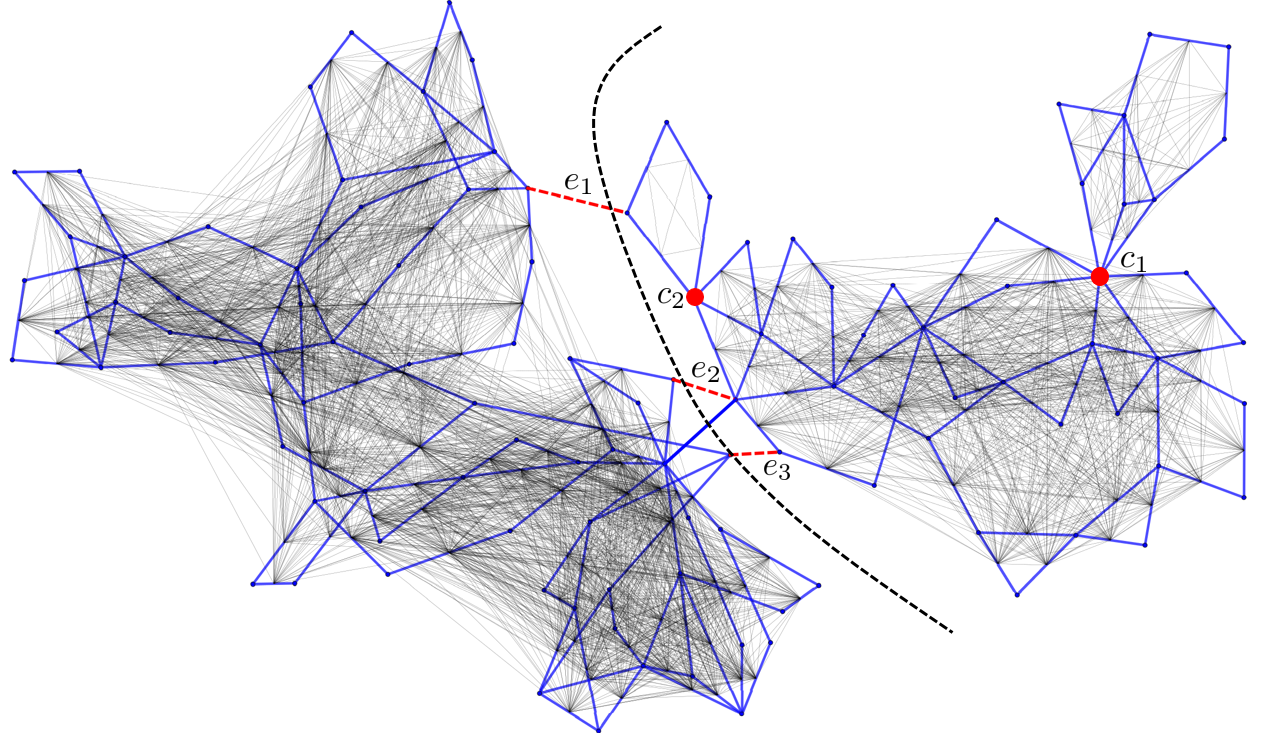}
}
}{
\subfloat[Original influence graph.]{
\includegraphics[width=.83\textwidth]{figs/inf_graph_before.png}
}
\hfill
\subfloat[The influence graph after switching off $e_1$, $e_2$ and $e_3$. The black dashed line indicates the failure propagation boundary defined by the tree partition. The vertices $c_1$ and $c_2$ are cut vertices.]{
\includegraphics[width=.83\textwidth]{figs/inf_graph_after.png}
}
}
\caption{Influence graphs on the IEEE 118-bus network before and after switching off lines $e_1$, $e_2$ and $e_3$. Blue edges represent physical transmission lines and grey edges represent connections in the influence graph.}\label{fig:inf_graph}
\end{figure*}

In the simple example above, removing a line provides improvements in both robustness and congestion. Now, we move to the case of a more realistic network, the IEEE 118-bus test system. In this case, we also see that line removals can improve robustness without more than minor increase in congestion.

In our experiments, the system parameters are taken from the Matpower Simulation Package \cite{zimmerman2011matpower} and we plot the influence graphs among the transmission lines to demonstrate how a line failure propagates in this network\footnote{The original IEEE 118-bus network has some trivial ``dangling'' bridges that we remove (collapsing their injections to the nearest bus) to obtain a more transparent influence graph.}. More specifically, in the influence graph we plot, two edges $e$ and $\hat{e}$ are connected if the impact of tripping $e$ on $\hat{e}$ is not negligible (we use $|K_{e\hat{e}}|\geq 0.005$ as a threshold). In Fig.~\ref{fig:inf_graph}(a), we plot the influence graph of the original network. It can be seen that this influence graph is very dense and connects many edges that are topologically far away, showing the non-local propagation of line failures within this network.

Next, we switch off three edges (indicated as $e_1$, $e_2$ and $e_3$ in Fig.~\ref{fig:inf_graph}(b)) to obtain a new topology that has a bridge and whose tree partition now consists of two regions of comparable size. The new influence graph is shown in Fig.~\ref{fig:inf_graph}(b). One can see that, compared to the original influence graph in Fig.~\ref{fig:inf_graph}(a), the new influence graph is much less dense and, in particular, there are no edges connecting transmission lines that belong to different tree partition regions. We also note that the network in Fig.~\ref{fig:inf_graph}(b) contains two cut vertices (indicated by $c_1$ and $c_2$ in the figure, with $c_2$ being created when we switch off the lines). It can be checked that line failures are ``blocked'' by these cut vertices, which verifies our results in Section \ref{section:non-bridge}.

It is also of interest to see how the network congestion is impacted by switching off these lines. To do so, we collect statistics on the difference between the branch flows in Fig.~\ref{fig:inf_graph}(b) and those in \ref{fig:inf_graph}(a). In Fig.~\ref{fig:lineflowscomparison}(a), we plot the histogram of such branch flow differences normalized by the original branch flow in Fig.~\ref{fig:inf_graph}(a). It shows that roughly half (the exact percentage is $47.41\%$) of the transmission lines have higher congestion yet the majority of these branch flow increases are negligible. To more clearly see how much the congestion becomes worse on these lines, we plot the cumulative distribution function of the normalized
positive branch flow changes, which is shown in Fig.~\ref{fig:lineflowscomparison}(b). One can see from the figure that $90\%$ of the the branch flows increase by no more than $10\%$. 

\begin{figure}
\centering
\iftoggle{isarxiv}{
\subfloat[]{\includegraphics[width=.24\textwidth]{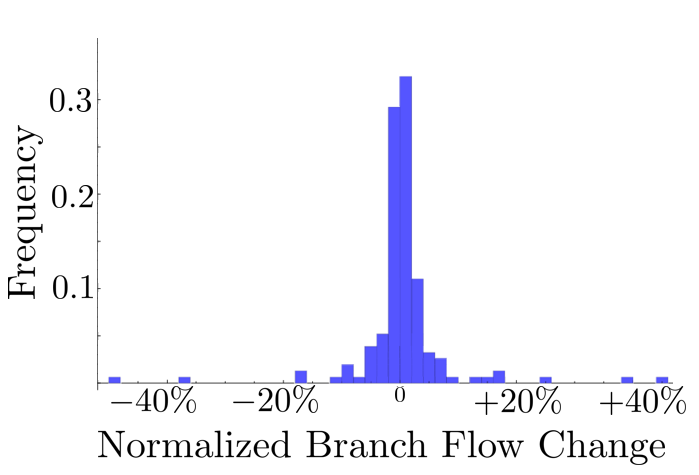}
}
\subfloat[]{\includegraphics[width=.24\textwidth]{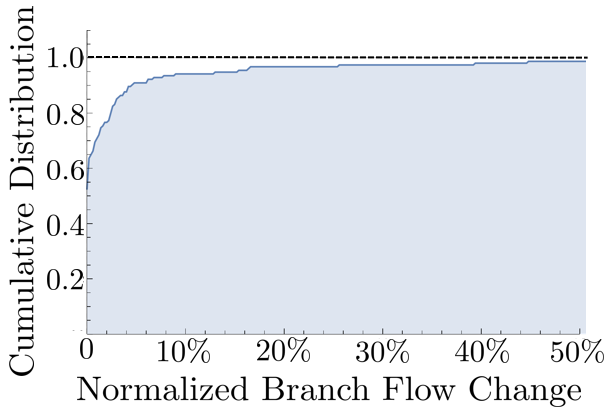}
}
}{
\subfloat[]{\includegraphics[width=.24\textwidth]{figs/histogram.png}
}
\subfloat[]{\includegraphics[width=.24\textwidth]{figs/cdf.png}
}
}
\caption{(a) Histogram of the normalized branch flow changes. (b) Cumulative distribution function of the positive normalized branch flow changes. Note that the curve intercepts the $y$-axis since $52.59\%$ of the branch flows decrease.}
	\label{fig:lineflowscomparison}
\end{figure}

\section{Conclusion}\label{section:conclusion}
In this work, we provide an analytical characterization of line failure localizability in power systems. We demonstrate that the transmission network graph encodes rich information on the regions that a line failure can impact and that such regions can be computed in linear time. Further, using a case study on the IEEE 118-bus test network, we show that switching off certain transmission lines can improve the grid robustness against line failures without significantly increasing line congestion.

This work can be extended in several directions. First, we provide an analytical characterization of power flow redistribution when a line fails, and our results are generalizable to bus failures. It is of interest to understand how these two types of failures interact. Second, we demonstrate in our case studies that by switching off certain transmission lines, grid robustness can potentially be improved. It would be useful if the selection of such lines can be optimized for a certain objective function, such as the sparsity of the influence graph or the total load loss when some critical lines are tripped. Third, to fully capture the cascading failure dynamics, both the power flow redistribution and the line capacities are relevant. It is important to investigate how line capacities can be incorporated to our framework.

\bibliographystyle{IEEEtran}
\bibliography{biblio}

\iftoggle{isreport}{
\appendices
\section{Proof of Proposition \ref{prop:unique_irreducible}}\label{proof:uniqueness}
For $\calG=(\calN, \calE)$, let $\calE_b$ be the set of edges that all spanning trees of $\calG$ pass through. In other words, $\calE_b$ is the set of all bridges in classical graph theory. Let $\calG_b=(\calN, \calE\bs\calE_b)$ denote the graph obtained from $\calG$ by removing all edges in $\calE_b$, and let $\calP^*=\set{C_1, C_2, \cdots, C_{k}}$ be its connected components, where $k=|\calE_b|+1$. We claim $\calP^*$ is a irreducible tree partition of $\calG$.

First we show that the reduced graph $\calG_{\calP^*}$ is a tree. Assume not, then there is a loop in $\calG_{\calP^*}$. Without loss of generality, let us assume the loop is $\paren{C_{1}, C_{2}, \cdots, C_{l}}$ for some $2\le l \le k$. Then by the way we form $\calG_{\calP^*}$, there exist vertices $n_1^t, n_1^s, n_2^t, n_2^s, n_3^t, \cdots, n_l^t, n_l^s$ such that:
\begin{enumerate}
\item For each $i$, the vertices $n_i^s,n_i^t\in C_i$
\item For each $i$, the edge $e_{i,i+_l1}:=(n_i^s, n_{i+_l1}^t) \in \calE$, where $+_l$ denotes the addition modulo $l$
\end{enumerate}
For any $i$, since $C_i$ is connected, we can find a path $P_i$ from $n_i^t$ to $n_i^s$. It is then clear that the concatenated path
$$
\paren{P_1,e_{1,2},P_2,e_{2,3}\ldots e_{l-1,l},P_l,e_{l,1}}
$$
forms a loop in the original graph $\calG$. As a result, not all spanning trees pass through $e_{1,2}$ and thus $e_{1,2}\notin\calE_b$, which leads to a contradiction.

Next we show $\calP^*$ is irreducible. To do so, we prove that $\calP^*$ is finer than any tree partition $\calP:=\set{\calN_1, \calN_2, \cdots, \calN_{k'}}$ of $\calG$. Consider a region in $\calP^*$, say $C_1$. Since both $\calP^*$ and $\calP$ are partitions of $\calG$, there must be some region in $\calP$, say $\calN_1$, such that $C_1\cap \calN_1\neq \emptyset$. We claim $C_1\subset \calN_1$. Otherwise, there exists another region in $\calP$, say $\calN_2$, such that $C_1\cap \calN_2\neq \emptyset$. Pick $n_1\in C_1\cap \calN_1$ and $n_2\in C_1\cap \calN_2$. Then $n_1\neq n_2$ because $\calN_1\cap\calN_2=\emptyset$. Now since $n_1,n_2\in C_1$, and $C_1$ does not contain any bridge (in classical graph theory sense), by Menger's Theorem \cite{menger1927allgemeinen}, there exists a cycle (which is not necessarily simple\footnote{A cycle is simple if it visits each vertex at most once.}) in $C_1$ containing both $n_1$ and $n_2$. By collapsing adjacent vertices in this cycle that belong to common regions, we can find regions $\calN_{l_1}^1, \calN_{l_2}^1, \cdots, \calN_{l_{p_1}}^1, \calN_{l_1}^2, \calN_{l_2}^2, \cdots, \calN_{l_{p_2}}^2$ so that the  path from $n_1$ to $n_2$ in this cycle is given by
$$
\paren{P^1_1,e_{1,l_1}^1,P^1_{l_1}, e_{l_1,l_2}^1,  \cdots, e_{l_{p_1}, 2}^1, P^1_2}
$$
and the path from $n_2$ to $n_1$ in this cycle is given by
$$
\paren{P^2_2,e_{2,l_1}^2,P^2_{l_1}, e_{l_1,l_2}^2,  \cdots, e_{l_{p_2}, 1}^2, P^2_1}
$$
where $e_{i, j}^{1}, e_{i,j}^{2}$ are edges with source vertices in $\calN_i$ and target vertices in $\calN_j$ and $P_i^1, P_i^2$ are paths contained in $\calN_i$. As a result, we see
$$
\paren{\calN_1,\calN_{l_1}^1, \cdots, \calN_{l_{p_1}}^1, \calN_2, \calN_{l_1}^2, \cdots, \calN_{l_{p_2}}^2}
$$
forms a loop in $\calG_{\calP}$. This implies $\calP$ is not a tree partition, contradicting our assumption.

We thus have shown that $\calP^*$ is a irreducible tree partition of $\calG$. Moreover, for any other irreducible tree partition $\ol{\calP}$, the above proof shows that
$$
\calP^*\succeq \ol{\calP}
$$
Since $\ol{\calP}$ is irreducible and thus maximal with respect to $\succeq$, we see $\ol{\calP} = \calP^*$. In other words, any irreducible tree partition of $\calG$ must coincide with $\calP^*$. This completes our proof. \qedd
\section{Proof of Proposition \ref{prop:same_region}}\label{proof:same_region}
Let $\calC$ be the cell that $e$ and $\hat{e}$ belong to and write $e=(i,j)$ and $\hat{e}=(w,z)$. Consider the polynomial in line susceptances $(B_e:e\in\calE)$ defined as
$$
f(B):=\sum_{E\in\calT\paren{\set{i,w}, \set{j,z}}}\chi(E)-\sum_{E\in\calT\paren{\set{i,z}, \set{j,w}}}\chi(E)
$$
We claim $f$ is not identically zero.

First, let $C$ be a simple cycle in $\calC$ that contains both $e$ and $\hat{e}$. Such a cycle always exists as $\calC$ is 2-connected by construction, and it is a classical result that any pair of edges are contained in a simple cycle for 2-connected graphs \cite{harary1969graph}. Therefore, we can find two disjoint paths $P_1$ and $P_2$ connecting the endpoints of $e$ and $\hat{e}$. Without loss of generality, assume $P_1$ connects $i$ to $w$ and $P_2$ connects $j$ to $z$. By iteratively adding edges from $\calG$ to $P_1$ and $P_2$, we can extend $P_1$ and $P_2$ to a spanning forest of $\calG$ consisting of exactly two trees. Moreover, the tree extended from $P_1$ contains $i, w$ and the tree extended from $P_2$ contains $j,z$. We thus have constructed an element of $\calT\paren{\set{i,w}, \set{j,z}}$. Denote this element by $E_0$.

Second, we show that 
$$\calT\paren{\set{i,w}, \set{j,z}}\cap \calT\paren{\set{i,z}, \set{j,w}}=\emptyset$$
Indeed, consider an element $E_1$ from $\calT\paren{\set{i,w}, \set{j,z}}$, which consists of two trees $\calT_1$ and $\calT_2$ with $\calT_1$ containing $i,w$ and $\calT_2$ containing $j,z$. If $E_1\in\calT\paren{\set{i,z}, \set{j,w}}$, then $\calT_1$ must also contain $z$. However, this implies $z\in \calT_1\cap \calT_2$, and thus $\calT_1$ and $\calT_2$ are not disjoint, contradicting the definition of $\calT\paren{\set{i,w}, \set{j,z}}$.

As a result, we see that the element $E_0$ constructed in our first step contributes a term to $\sum_{E\in\calT\paren{\set{i,w}, \set{j,z}}}\chi(E)$ but not to $\sum_{E\in\calT\paren{\set{i,z}, \set{j,w}}}\chi(E)$. Therefore $f(B)$ contains nonvanishing terms and is not identically zero.

It is well-known from algebraic geometry that the root set of a polynomial which is not identically zero has Lebesgue measure zero \cite{federer1969geometric}. That is, we have
$$
\mu\paren{f(B+\omega)=0} = \calL_m\paren{f(B+\omega)=0} = 0
$$
where the first equality is because $\mu$ is absolutely continuous with respect to $\calL_m$ (it is clear that the root set of the polynomial $f$ is measurable).

Finally, by Theorem \ref{thm:redist_factor}, we know $K_{e\hat{e}}= 0$ if and only if $f(B+\omega)=0$. This then implies
$$
\mu(K_{e\hat{e}}\neq 0) = 1 - \mu(K_{e\hat{e}}=0) = 1-\mu\paren{f(B+\omega)=0} = 1
$$
and completes our proof. \qedd
\section{Proof of Proposition \ref{prop:bridge_prop}}\label{proof:bridge_prop}
\begin{figure}
\centering
\iftoggle{isarxiv}{
\includegraphics[width=.35\textwidth]{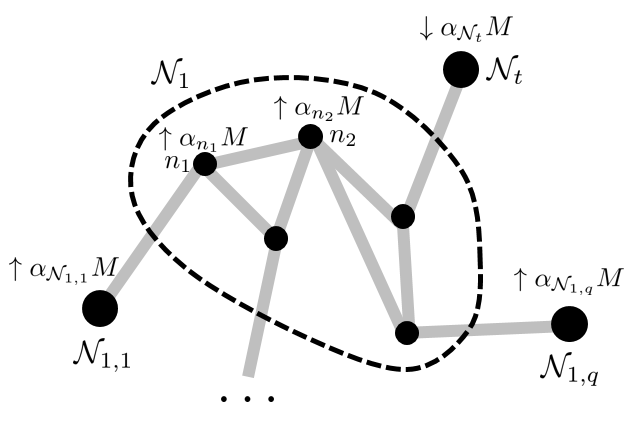}
}{
\includegraphics[width=.35\textwidth]{figs/localized_graph.png}
}
\caption{The localized graph $\calG_{\calN_1}$. $\calN_t$ is the imaginary bus containing $e$ and $\calN_{1,i}$'s are remaining imaginary buses. The power adjustments from the power balance rule $\calB$ are shown near each participating bus in reaction to a power loss of $M$.}\label{fig:localized}
\end{figure}
Denote the two connected components of $\calG$ after $e$ is tripped as $D_1,D_2$. Since the injection is island-free, we know $\sum_{j\in D_1}p_i\neq 0$ and $\sum_{j\in D_2}p_i\neq 0$, and thus the power is not balanced in neither $D_1$ nor $D_2$. We will show that $K_{e\hat{e}}\neq 0$ for any $\hat{e}$ within $D_1$. The case where $\hat{e}$ belongs to $D_2$ can be addressed similarly.

First we consider the case where $\hat{e}$ is a bridge. Denote the two connected components of $D_1$ after removing $\hat{e}$ as $D_{1,1}$ and $D_{1,2}$, and without loss of generality assume $D_{1,2}$ is originally connected to $e$ in $\calG$. It is easy to see that the branch flow change on $\hat{e}$ is given by
$$
\Delta P_{\hat{e}} = \sum_{j\in D_{1,1}}\delta p_j \neq 0
$$
where the last $\neq$ is because the grid is participating and thus all tree partition regions in $D_{1,1}$ would adjust their injections (in the same ``direction'' as $\alpha_j$'s are positive). As a result we see $K_{e\hat{e}}=\Delta P_{\hat{e}}/P_e\neq 0$.

Next we consider the case where $\hat{e}$ is not a bridge. In this case, $\hat{e}$ belongs to a certain tree partition region, say $\calN_1$. Let $n_1,n_2,\cdots, n_l$ be the set of participating buses in $\calN_1$. By collapsing all regions other than $\calN_1$, we can define a localized graph $\calG_{\calN_1}$ centered around $\calN_1$ as shown in Fig.~\ref{fig:localized}. Specifically, for each bridge $b$ incident to $\calN_1$, we create an imaginary bus that aggregates all injections inside all tree partition regions that can be reached by $\calN_1$ through $b$ \emph{before} $e$ is tripped, and this imaginary bus is connected to $\calN_1$ via the corresponding bridge $b$. If $e$ is not directly incident to $\calN_1$, then $e$ must connect two regions that are collapsed to a common imaginary bus, say $\calN_t$. If $e$ is incident to $\calN_1$ and its end point in $\calN_1$ is $n_e$, then we let $\calN_t$ be an additional imaginary bus that is connected to $n_e$ to mimic the edge $e$ (that is, before $e$ is tripped, this imaginary bus $\calN_t$ supplies power $P_e$ towards $n_e$ and after $e$ is tripped, $\calN_t$ loses all its injection; moreover the edge connecting $\calN_t$ to $n_e$ has susceptance $B_e$). Denote other imaginary buses as $\calN_{1,1},\calN_{1,2},\cdots, \calN_{1,q}$. 

Without loss of generality, let us assume because of the tripping of $e$, the aggregate power in $D_1$ is in shortage of $M$. Then the enforcement of the power balance rule $\calB$ would increase the power injection at each participating bus in $D_1$, which translates to the power adjustment as demonstrated in Fig.~\ref{fig:localized}. Specifically, the injection at $\calN_t$ would drop by $\alpha_{\calN_t}M$ as the power flow from $e$ is lost (the drop is generally not $M$ unless $e$ is directly incident to $\calN_1$); the injections at $\calN_{1,1},\calN_{1,2}, \cdots, \calN_{1,q}$ increase by $\alpha_{\calN_{1,i}} M$; and injections at the participating buses $n_1,n_2,\cdots, n_l$ in $\calN_1$ increase by $\alpha_{n_i}M$. In other words, by rebalancing power according to the rule $\calB$, we effectively shift the injections from $\calN_t$ to $\calN_{1,1},\calN_{1,2}, \cdots, \calN_{1,q}$ and $n_1,n_2,\cdots, n_l$.

Let the set of edges in $\calG_{\calN_1}$ be $\calE_1$. Pick $\calN_t$ to be the slack bus in this localized graph $\calG_{\calN_1}$ and define an index set
$$
\calI:=\set{\calN_{1,1},\calN_{1,2}, \cdots, \calN_{1,q}, n_1,n_2,\cdots, n_l}
$$
For any $i\in\calI$, let $g^{i}(B)$ be the following polynomial in line susceptances $(B_e:e\in\calE_1)$ 
$$
\sum_{E\in\calT_{\calE_1}\paren{\set{i,w},\set{\calN_t,z}}}\chi(E) - \sum_{E\in\calT_{\calE_1}\paren{\set{i,z},\set{\calN_t,w}}}\chi(E)
$$
where $\calT_{\calE_1}\paren{\calN_1,\calN_2}$ is the set of spanning forests of $\calG_{\calN_1}$ consisting of exactly two trees that contain $\calN_1$ and $\calN_2$ respectively (see Section \ref{section:model} for more details on the corresponding notion in $\calG$). By Proposition V.3 from our previous work \cite{guo2017monotonicity}, the branch flow change on $\hat{e}$ coming from the power shift from $\calN_t$ with amount $\alpha_iM$ towards $i$ is given by
$$
\Delta P^{i}_{\hat{e}}=\alpha_{i}M \times \frac{g^{i}(E)}{\sum_{E\in\calT_{\calE_1}}\chi(E)}
$$
where $\calT_{\calE_1}$ denotes the set of all spanning trees of $\calG_{\calN_1}$. Put
$$
g(B):=\sum_{i\in\calI}\alpha_{i}g^{i}(B)
$$
By linearity, we know
$$
\Delta P_{\hat{e}}=M\times \frac{g(B)}{\sum_{E\in\calT_\calE}\chi(E)}
$$

We claim $g(B)$ is not identically zero. Indeed, by a similar argument to the proof of Proposition \ref{prop:same_region}, we know that for any $i,j\in\calI$ (including the case $i=j$), the following is true:
$$
\calT_{\calE_1}\paren{\set{i,w},\set{\calN_t,z}}\cap \calT_{\calE_1}\paren{\set{j,z},\set{\calN_t,w}}=\emptyset
$$
As a result, a term in $g^i(B)$ with coefficient $1$ is never canceled by a term in $g^j(B)$ with coefficient $-1$, and vice versa. Therefore, to show $g(B)$ is not identically zero, it suffices to show at least one of $g^i(B)$ is not identically zero.

To do so, let $\calC$ be the cell that contains $\hat{e}$. Since $\calN_1$ is participating with respect to $\calB$, we know there exists a bus within $\calC$, say $n_1$, that participates the power balance and is not a cut vertex. Recall that any path from $\calN_t$ to $\calC$ must go through a common cut vertex in $\calC$ \cite{harary1969graph}, say $n_e$. Now by adding an edge between $n_e$ and $n_1$ (if it does not exist originally), the resulting cell $\calC'$ is still 2-connected. Thus there exists a simple loop in $\calC'$ that contains the edge $(n_e,n_1)$ and $\hat{e}=(w,z)$, which implies we can find two disjoint paths $P_1$ and $P_2$ connecting the endpoints of these two edges. Without loss of generality, assume $P_1$ connects $n_e$ to $w$ and $P_2$ connects $n_1$ to $z$. By concatenating the path from $\calN_t$ to $n_e$, we can extend $P_1$ to a path $\tilde{P}_1$ from $\calN_t$ to $w$, which is still disjoint from $P_2$. Now, by iteratively adding edges, we can extend $\tilde{P}_1$ and $P_2$ to a spanning forest of $\calG_{\calN_1}$ consisting of exactly two trees, which is an element of $\calT_{\calE_1}\paren{\set{n_1,z},\set{\calN_t,w}}$. This in particular implies $g^{n_1}(B)$ is not identically zero and therefore $g(B)$ is not identically zero.

Again by the classical algebraic geometry result asserting the root set of any polynomial that is not identically zero has Lebesgue measure zero\cite{federer1969geometric}, and because of the absolute continuity of $\mu$, we know
$$
\mu\paren{\Delta P_{\hat{e}}=0}=\calL_m\paren{g(B+\omega)=0}=0
$$
and thus
$$
\mu\paren{K_{e\hat{e}}\neq 0}=\mu\paren{\Delta P_{\hat{e}}\neq 0} = 1
$$
This completes our proof. \qedd
}{}

\end{document}